\title{On Corecursive Algebras for Functors Preserving Coproducts\footnote{Full version of the paper published in: Proc.~7th Conference on Algebra and Coalgebra in Computer
Science (CALCO 2017), LIPIcs 72, 3:1--3:15.}}
\titlerunning{Corecursive Algebras}
\author[1]{Ji\v{r}\'i Ad\'amek}
\author[2]{Stefan Milius\footnote{Supported by Deutsche Forschungsgemeinschaft (DFG) under project MI~717/5-1}$^,$}
\affil[1]{Institut f\"{u}r Theoretische Informatik \\
  Technische Universit\"{a}t Braunschweig, Germany}
\affil[2]{Lehrstuhl f\"{u}r Theoretische Informatik\\
  Friedrich-Alexander-Universit\"{a}t Erlangen-N\"{u}rnberg, Germany}
\authorrunning{J.~Ad\'{a}mek and S.~Milius}
\subjclass{F.3.2~Semantics of Programming Languages}
\keywords{terminal coalgebra, free algebra, corecursive algebra, hyper-extensive category}
\begin{document}
\maketitle
\begin{abstract}
  For an endofunctor $H$ on a hyper-extensive category preserving
  countable coproducts we describe the free corecursive algebra on $Y$
  as the coproduct of the terminal coalgebra for $H$ and the free
  $H$-algebra on $Y$. As a consequence, we derive that $H$ is a 
  \emph{cia functor}, i.e., its corecursive algebras are precisely the cias 
  (completely iterative algebras). Also all functors $H(-) + Y$ are
  then cia functors.  For finitary set functors we prove that,
  conversely, if $H$ is a cia functor, then it has the form
  $H = W \times (-) + Y$ for some sets $W$ and $Y$.
\end{abstract}

%
% Text begins here
%
\section{Introduction}

%\smnote[inline]{Here we should show the formula described in abstract!}
Iteration and (co)recursion are of central importance in computer
science. A formalism for iteration was proposed by Elgot~\cite{elgot}
as iterative algebraic theories. Later Nelson~\cite{nelson} and
Tiuryn~\cite{tiuryn80} introduced iterative algebras for finitary
signatures which yield an easier approach to iterative theories. For
endofunctors $H$ there are two related notions of
algebras. \emph{Corecursive algebras} introduced by Capretta et
al.~\cite{cuv09} are those algebras $A$ such that every recursive
equation expressed as a coalgebra for $H$ has a unique solution
(i.e., a coalgebra-to-algebra morphism into $A$). The other notion,
\emph{completely iterative algebras} (or \emph{cia}, for short),
introduced by the second author \cite{m_cia}, are $H$-algebras $A$
with the stronger property that every recursive equation with
parameters in $A$ has a unique solution
(Definition~\ref{D:cia}). Corecursive algebras often fail to be
cias. In the present paper we study endofunctors such that every
corecursive algebra is a cia -- we call them \emph{cia functors}.

Our first result is that every endofunctor preserving countable
coproducts and having a terminal coalgebra is a cia functor
(Corollary~\ref{cor:ciacorec}). This is based on a description of the
free cia on an object $Y$ as a coproduct
\[
\nu H + FY
\] 
of the terminal coalgebra and the free algebra on $Y$
(Theorem~\ref{thm:coprod}). We deduce that, for $H$ preserving
countable coproducts and having a terminal coalgebra, we obtain cia
functors $H(-) + Y$ for all objects $Y$
(Corollary~\ref{C:strong}). All this holds in every \emph{hyper-extensive}
base category (Definition~\ref{dfn:hyper}), e.g., in sets, posets,
graphs and all presheaf categories.

In particular, if the base category is also cartesian closed, then
$X \mapsto W \times X + Y$ is a cia functor for every pair of objects $W$ and
$Y$. For finitary set functors we prove a surprising converse: the
only cia functors are those of the above form $X \mapsto W \times X + Y$.

\iffull
Finally, we investigate the Eilenberg-Moore algebras for the free cia
monad $T$. In general, these are characterized as the complete Elgot
algebras for $H$~\cite{amv_elgot}. In the setting of this paper the
monad $T$ is also the monad of free corecursive algebras. The
Eilenberg-Moore algebras for the latter monad were characterized as
Bloom algebras for accessible functors on locally presentable
categories~\cite[Theorem~4.15]{ahm14}. We prove that under our
assumptions on $H$ complete Elgot algebras and Bloom algebras for $H$
are the same (Theorem~\ref{thm:elgotbloom}). 
\else
Due to space constraints we omit some proof details. However, these
details may be found in the full version~\cite{am17} of our paper. 
\fi

\section{Preliminaries}

% We assume that readers are familiar with basic category theory, and
% in particular with algebras and coalgebras for functors. We denote
% by $\Set$ the category of sets and functions. In any category $\C$
% we write $A+B$ for the coproduct of the objects $A$ and $B$ and
% denote coproduct injections $\inl: A \to A+B$ and $\inr: B \to
% A+B$. For an endofunctor $H: \C \to \C$

Throughout the paper $H$ denotes an endofunctor on a hyper-extensive
category (recalled below) having a terminal coalgebra
\[
\ter: \nu H \to H(\nu H).
\]
By the famous Lambek Lemma~\cite{lambek68}, the coalgebra structure
$t$ is invertible and its inverse makes $\nu H$ an $H$-algebra. 

We denote by
$
\Alg H
$
the category of $H$-algebras and their morphisms.

\begin{definition}[\cite{abmv08}]
  \label{dfn:hyper}
  A category is called \emph{hyper-extensive} if it has countable
  coproducts which are
  \begin{enumerate}[(1)]
  \item \emph{universal}, i.e., preserved by pullbacks along any
    morphism,
  \item \emph{disjoint}, i.e., coproduct injections are monomorphic and
    have pairwise intersection $0$ (the initial object), and
  \item \emph{coherent}, i.e., given pairwise disjoint morphisms $a_n:
    A_n \to A$, $n \in \N$, each of which is a coproduct injection,
    then their copairing $[a_n]_{n \in \N}: \coprod_{n \in \N} A_n \to
    A$ is also a coproduct injection.
  \end{enumerate}
\end{definition}

\begin{example}
  The categories of sets, posets, graphs, and presheaf categories are
  hyper-extensive.
\end{example}

\begin{rem}
  \begin{enumerate}[(1)]	
  \item We write $A+B$ for the coproduct of the objects $A$ and $B$
    and denote coproduct injections by $\inl: A \to A+B$ and
    $\inr: B \to A+B$.
  \item Recall that a category with finite coproducts is
    \emph{extensive} if it has pullbacks along coproduct injections
    and conditions~(1) and (2) are
    satisfied~\cite{clw93}. Equivalently, in a diagram of the following form
    \[
      \xymatrix{
        X
        \ar[d]_f \ar[r]^x
        &
        Z
        \ar[d]_h
        &
        Y 
        \ar[l]_-{y} 
        \ar[d]^g
        \\
        A \ar[r]_-{\inl}
        & 
        A + B
        &
        B \ar[l]^-{\inr}
      }
    \]
    the top row is a coproduct if and only if the squares are
    pullbacks. Another, more compact, equivalent characterization of
    extensivity states that the canonical functor
    $\C/A \times \C/B \to \C/(A+B)$ is an equivalence of categories
    for any pair of objects $A$ and $B$.
 
  \item The somewhat technical condition~(3) in
    Definition~\ref{dfn:hyper} is not a consequence of the other
    two. In fact, let $\C$ be the category of J\'onsson-Tarski algebras,
    i.e., binary algebras $A$ whose operation $A \times A \to A$ is a
    bijection. Then $\C$ has disjoint and universal countable (in fact, all)
    coproducts but is not hyperextensive~\cite{abmv08}.
  \end{enumerate}
\end{rem}
\begin{definition}[\cite{cuv09}]
  \label{dfn:corec}
  An algebra $a: HA \to A$ is called \emph{corecursive} if for every
  coalgebra $e: X \to HX$ there exists a unique algebra-to-coalgebra
  morphism $\sol e: X \to A$:
  \begin{equation}\label{eq:corec}
    \vcenter{\xymatrix{
      X \ar[r]^-{\sol e}
      \ar[d]_e 
      & 
      A
      \\
      HX
      \ar[r]_-{H\sol e}
      & 
      HA
      \ar[u]_a
    }}
  \end{equation}
\end{definition}
\begin{examples}\label{ex:corec}
  \begin{enumerate}[(1)]
  \item The terminal coalgebra $\nu H$ (considered as an algebra) is
    obviously corecursive. This is the initial corecursive
    algebra~\cite{cuv09}. 

    Furthermore, let $Y$ be an object of $\C$ and assume that the
    functor $H(-)+Y$ has a terminal coalgebra $TY$. Then its structure
    \[
      TY \xrightarrow{\alpha_Y} HTY + Y
    \]
    has an inverse which is the copairing of two morphisms denoted by 
    \[
      HTY \xrightarrow{\tau_Y} TY \xleftarrow{\eta_Y} Y.
    \]
    It follows that $TY$ is a coproduct of $HTY$ and $Y$ with the
    above coproduct injections. It is easy to show that $(TY, \tau_Y)$
    is a corecursive algebra. 
  \item The trivial terminal algebra $H1 \to 1$ is corecursive, and if
    $(A,a)$ is a corecursive algebra so is
    $(HA, Ha)$~\cite[Prop.~21]{cuv09}. Furthermore, if $\C$ has limits
    then corecursive algebras are closed under limits in the category
    of algebras for $H$~\cite[Prop.~2.4]{ahm14}. It follows that all
    members of the \emph{terminal-coalgebra} chain
    \[
      \xymatrix@1{1 & H1 \ar[l] & HH1 \ar[l] & \cdots \ar[l]}
    \]
    are corecursive algebras. 
  \item A particular instance of point~(1) is given by a signature
    $\Sigma = (\Sigma_n)_{n < \omega}$ of operation symbols with
    prescribed arity and considering the corresponding polynomial
    endofunctor $H_\Sigma$ on $\Set$ defined by
    \[
      H_\Sigma X = \coprod_{n < \omega} \Sigma_n \times X^n.
    \]
  \end{enumerate}
  For an operation symbol $\sigma \in \Sigma_n$ we write $\sigma(x_1,
  \ldots, x_n)$ in lieu of $(\sigma, (x_1, \ldots, x_n))$ for 
  elements in the summand of $H_\Sigma X$ corresponding to 
  $n < \omega$. The terminal coalgebra $\nu H_\Sigma$ is carried by the set of
  all \emph{$\Sigma$-trees}, i.e., rooted and ordered trees with nodes
  labeled in $\Sigma$ such that every node with $n$ children is
  labeled by an $n$-ary operation symbol. The algebraic operation of
  $\nu H_\Sigma$ is \emph{tree-tupling}: $t^{-1}$ assigns to
  $\sigma(t_1, \ldots, t_n)$ with $\sigma \in \Sigma_n$ and $t_i \in
  \nu H_\Sigma$, $i = 1 \ldots, n$, the $\Sigma$-tree obtained by
  joining the $\Sigma$-trees $t_1, \ldots, t_n$ by a root node labeled
  by $\sigma$. 

  For every set $Y$ we denote by
  \[
    T_\Sigma Y
  \]
  the algebra of all \emph{$\Sigma$-trees over $Y$}, i.e., $\Sigma$-trees
  whose leaves are labeled by constant symbols in $\Sigma_0$ or
  elements of $Y$. This is the terminal coalgebra for $H_\Sigma(-) + Y$,
  and therefore it is a corecursive algebra.
\end{examples}
\begin{rem}
  For a polynomial endofunctor $H_\Sigma$ on $\Set$ we can view a
  coalgebra $e: X \to H_\Sigma X$ as a system of \emph{recursive
    equations} over the set $X$ of (recursion) variables: for every
  variable $x \in X$ we have a formal equation
 \[
   x \approx \sigma(x_1, \ldots, x_n) = e(x).
 \]
 The map $\sol e$ in Definition~\ref{dfn:corec} is then a
 \emph{solution} of the system of equations in the $\Sigma$-algebra
 $A$: the commutative square~\eqref{eq:corec} states that $\sol e$
 turns the above formal equations into actual identities in $A$:
 \iffull\[\else\/$\fi
   \sol e(x) = \sigma^A(\sol e(x_1), \ldots, \sol e(x_n)).
 \iffull\]\else\/$\fi
\end{rem}
\begin{definition}[\cite{m_cia}]
  \label{D:cia}
  An algebra $a: HA \to A$ is called \emph{completely iterative} (or
  \emph{cia}, for short) if the algebra $[a, A]: HA + A \to A$ is
  corecursive for the endofunctor $H(-)+A$. That means that for
  every \emph{(flat) equation morphism} $e: X \to HX + A$ there exists
  a unique \emph{solution}, i.e., a unique morphism $\sol e$ such that
  square below commutes:
\begin{equation}\label{diag:sol}
  \vcenter{
    \xymatrix@C+2pc{
      X \ar[r]^-{\sol e} \ar[d]_e & A \\
      HX + A\ar[r]_-{H\sol e + A} & HA + A \ar[u]_{[a, A]}
    }
  }
\end{equation}
\end{definition}
\begin{examples}\label{ex:cia}
  \begin{enumerate}[(1)]
  \item If $H(-)+Y$ has a terminal coalgebra $TY$
    (cf.~Example~\ref{ex:corec}(1)), then $(TY, \tau_Y)$ is a cia. In
    fact, $(TY, \tau_Y)$ is a free cia on $Y$ with the universal
    morphism $\eta_Y$~\cite{m_cia}. 
  \item For a polynomial functor $H_\Sigma$ on $\Set$ the above
    example states that the algebra $T_\Sigma Y$ of all $\Sigma$-trees
    over $Y$ is the free cia on the set $Y$. Let us denote by
    \[
      C_\Sigma Y
    \]
    the subalgebra of $T_\Sigma Y$ given by all $\Sigma$-trees over
    $Y$ which have only a finite number of leaves labeled in $Y$ (and
    the remaining, possibly infinitely many, leaves are labeled in
    $\Sigma_0$). This algebra is corecursive but, whenever $\Sigma$
    contains an operation symbol of arity at least $2$, not a
    cia. Moreover, $C_\Sigma Y$ is the free corecursive algebra on
    $Y$~\cite{ahm14}.

    As a concrete example, consider the signature $\Sigma$ consisting
    of a single binary operation $\sigma$. Then the equation morphism
    $e: \{x_1, x_2\} \to H_\Sigma\{x_1,x_2\} + \{y\}$ given by the
    recursive equations
    \iffull
    \[
      x_1 \approx \sigma(x_1,x_2) \qquad\text{and}\qquad x_2 \approx y
    \]
    \else
    $x_1 \approx \sigma(x_1,x_2)$ and $x_2 \approx y$ \fi
    has the unique solution $\sol e: \{x_1,x_2\} \to T_\Sigma\{y\}$
    given as follows
    \[
      \sol e: x_1 \mapsto 
      \vcenter{
        \xy
        \POS (000,000) *+{\sigma} = "s0",
             (-05,-05) *+{\sigma} = "s1",
             (005,-05) *+{y} = "y1",
             (-10,-10) *+{\sigma} = "s2",
             (000,-10) *+{y} = "y2",
             (-15,-15) *+{\vdots} = "vd",
             (-05,-15) *+{y} = "y3",
        %%%
        \ar@{-} "s0";"s1",
        \ar@{-} "s1";"s2",
        \ar@{-} "s2";"vd",
        \ar@{-} "s0";"y1",
        \ar@{-} "s1";"y2",
        \ar@{-} "s2";"y3"
        \endxy
      }
      \qquad
      x_2 \mapsto y.
    \]
    This demonstrates that $C_\Sigma\{y\}$ is not a cia because the above
    infinite $\Sigma$-tree is not contained in it. 
  \end{enumerate}
\end{examples}
\begin{definition} 
  A \emph{cia functor} is an endofunctor such that every corecursive
  algebra for it is a cia. (It the follows that cias and corecursive
  algebras coincide).
\end{definition}
\begin{notation}\label{not:FC}
  \begin{enumerate}[(1)]
  \item If a free $H$-algebra on $Y$ exists, we denote it by
    $FY$ and its structure and universal morphism by
    \[
      \phi_Y: HFY \to FY \qquad\text{and}\qquad
      \eta^F_Y: Y \to FY,
    \]
    respectively. 

    In the case of a polynomial set functor $H_\Sigma$, the free
    $\Sigma$-algebra $F_\Sigma Y$ is the subalgebra of $T_\Sigma Y$ on
    all finite $\Sigma$-trees over $Y$. 
  \item If a free corecursive $H$-algebra on $Y$ exists, we denote it
    by $CY$ and its structure and universal morphism by
    \[
      \psi_Y: HCY \to CY 
      \qquad{\text{and}}\qquad
      \eta^C_Y: Y \to CY,
    \]
    respectively.
  \end{enumerate}
\end{notation}

\section{Functors Preserving Countable Coproducts}
\label{sec:coprod}

\begin{assumption}\label{ass:3}
  In this and the subsequent section we assume that $H$ is an
  endofunctor on a hyper-extensive category having a terminal
  coalgebra and preserving countable coproducts.
\end{assumption} 

\begin{fact}[\cite{ArbibManes74}]
  A free algebra on $Y$ is 
  \[
    FY = H^*Y = \coprod_{n< \omega} H^nY
    \qquad
    \text{with coproduct injections $j_n: H^nY \to H^*Y$}.
  \]
  Its algebra structure and universal morphism are given by
  \[
    \phi_Y \cdot Hj_n = j_{n+1} \quad (n >0) \qquad\text{and}\qquad
    \eta^F_Y = j_0: Y \to H^* Y
  \]
  using that $HFY = \coprod_{n < \omega} H^{n+1} Y$.
\end{fact}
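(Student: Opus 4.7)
The plan is to construct the candidate object $H^*Y = \coprod_{n<\omega} H^nY$ explicitly, equip it with an $H$-algebra structure, and verify the universal property of a free algebra on $Y$. The whole proof is driven by the hypothesis that $H$ preserves countable coproducts, which yields the key identification $H(H^*Y) = \coprod_{n<\omega} H^{n+1}Y$ with injections $Hj_n$. Using this, I would define $\phi_Y : H(H^*Y) \to H^*Y$ as the unique morphism satisfying $\phi_Y \cdot Hj_n = j_{n+1}$ for all $n \geq 0$, and set $\eta^F_Y = j_0 : Y \to H^*Y$.

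To verify the universal property, take any algebra $a : HA \to A$ and morphism $f : Y \to A$. I would define a compatible family $f_n : H^nY \to A$ by recursion on $n$: put $f_0 = f$ and $f_{n+1} = a \cdot Hf_n$, and then let $f^{\sharp} = [f_n]_{n<\omega} : H^*Y \to A$ be the induced copairing. The equation $f^{\sharp} \cdot \eta^F_Y = f$ is immediate. To see that $f^{\sharp}$ is an algebra morphism, I would check the required equation on each coproduct summand of $H(H^*Y)$:
\[
  f^{\sharp} \cdot \phi_Y \cdot Hj_n = f^{\sharp} \cdot j_{n+1} = f_{n+1} = a \cdot Hf_n = a \cdot H(f^{\sharp} \cdot j_n) = a \cdot Hf^{\sharp} \cdot Hj_n,
\]
and then invoke the universal property of the coproduct $\coprod_{n<\omega} H^{n+1}Y = H(H^*Y)$ to conclude that $f^{\sharp} \cdot \phi_Y = a \cdot Hf^{\sharp}$.

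For uniqueness, any algebra morphism $g : H^*Y \to A$ with $g \cdot \eta^F_Y = f$ satisfies $g \cdot j_0 = f = f_0$, and inductively
\[
  g \cdot j_{n+1} = g \cdot \phi_Y \cdot Hj_n = a \cdot Hg \cdot Hj_n = a \cdot H(g \cdot j_n) = a \cdot Hf_n = f_{n+1},
\]
so $g \cdot j_n = f_n$ for all $n$, forcing $g = f^{\sharp}$.

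No genuine obstacle arises here; the proof is essentially routine once the algebra structure is in hand. The only subtle point is that every step (defining $\phi_Y$ by cases, factoring $Hf^{\sharp}$ through the $Hj_n$, and reading off algebra-morphism conditions componentwise) depends on $H$ preserving the countable coproduct $\coprod_n H^n Y$. This is precisely Assumption~\ref{ass:3}, so the hypothesis does all the work.
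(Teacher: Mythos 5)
Your proof is correct and is the standard argument; the paper itself gives no proof of this Fact (it is simply cited from Arbib and Manes), and your construction of $\phi_Y$ via the preserved coproduct $H(H^*Y)=\coprod_{n<\omega}H^{n+1}Y$, the recursive definition $f_0=f$, $f_{n+1}=a\cdot Hf_n$, and the componentwise verification plus induction for uniqueness is exactly what that reference does. One small point in your favour: the defining equations $\phi_Y\cdot Hj_n=j_{n+1}$ must indeed hold for all $n\geq 0$ (as you use them), so the paper's side condition ``$n>0$'' is evidently a typo.
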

\begin{notation}
  We denote by 
  \[
    \sigma_Y: H^* Y = \coprod_{n< \omega} H^n Y \to Y +
    H\left(\coprod_{n<\omega} H^n Y\right) = Y + HH^*Y 
  \]
  the isomorphism inverse to $[\eta^F_Y,
  \phi_Y]: Y + HH^*Y \to
  H^*Y$. It is defined by the following commutative diagrams:
    \begin{equation}\label{diag:tri}
    \vcenter{
    \xymatrix{
      Y 
      \ar[rd]^-{\inl}
      \ar[d]_{j_0}
      \\
      H^*Y
      \ar[r]_-{\sigma_Y}
      &
      Y + HH^*Y
    }
    }
    \qquad\qquad
    \vcenter{
    \xymatrix{
      H^n Y
      \ar[r]^-{Hj_{n-1}}
      \ar[d]_{j_n}
      &
      HH^*Y
      \ar[d]^{\inr}
      \\
      H^*Y 
      \ar[r]_-{\sigma_Y}
      &
      Y + HH^* Y
    }}\qquad\text{for $n > 0$}.
  \end{equation}
\end{notation}
\iffull% only needed in full version
\begin{lemma}\label{lem:hyper}
  In a hyper-extensive category, given a coproduct $A =
  \coprod_{n<\omega} A_n$ with injections $a_n: A_n \to A$, the
  subobjects
  \[
    \bar a_k = [a_0, [a_n]_{n\geq k}]: A_0 + \coprod_{n \geq k} A_n \to A\qquad
    (k\geq 1)
  \]
  have the intersection $a_0: A_0 \to A$.
\end{lemma}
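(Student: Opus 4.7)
My plan is to verify the two inequalities in the subobject lattice of $A$. The easy direction $a_0 \leq \bar a_k$ for each $k \geq 1$ is immediate from $\bar a_k \cdot \inl = a_0$, and each $\bar a_k$ is genuinely a subobject because by coherence (condition~(3)), applied to the pairwise disjoint family $\{a_0\} \cup \{a_n\}_{n \geq k}$, it is a coproduct injection and hence monic. So $a_0 \leq \bigcap_{k \geq 1} \bar a_k$.

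For the converse, let $f: X \to A$ factor through every $\bar a_k$; I must exhibit a factorization through $a_0$. By universality (condition~(1)), pulling the coproduct $A = \coprod_n A_n$ back along $f$ yields a decomposition $X = \coprod_n X_n$, where $X_n \to X$ is one leg of the pullback of $f$ against $a_n$ and I write $p_n: X_n \to A_n$ for the other leg. The heart of the proof is to show $X_n = 0$ for every $n \geq 1$. Fix such an $n$, choose $k > n$, and write $f = \bar a_k \cdot g$ with $D_k = A_0 + \coprod_{m \geq k} A_m$. Pullback pasting gives $X_n = X \times_A A_n = X \times_{D_k} (D_k \times_A A_n)$, and universality again yields $D_k \times_A A_n = (A_0 \times_A A_n) + \coprod_{m \geq k}(A_m \times_A A_n)$. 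Disjointness (condition~(2)) forces every summand to be $0$, since $n$ is neither $0$ nor any $m \geq k$. Finally $X_n = X \times_{D_k} 0 = 0$ because universal coproducts make $0$ strict (the empty coproduct is preserved by pullback).

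Consequently $X = X_0 + \coprod_{n \geq 1} 0 = X_0$ via the first coproduct injection, so $f$ factors as $a_0 \cdot p_0$ composed with the inverse of this iso, establishing $\bigcap_{k \geq 1} \bar a_k \leq a_0$. The main obstacle is the routine but fiddly bookkeeping with iterated pullbacks; conceptually everything follows directly from the three defining axioms of hyper-extensivity, together with the observation that universal coproducts render $0$ strict.
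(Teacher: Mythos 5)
Your proof is correct and follows essentially the same route as the paper's: both decompose the domain of $f$ as $\coprod_n X_n$ by universality and then show $X_n \cong 0$ for all $n \geq 1$ by playing the factorization through $\bar a_k$ (for $k > n$) against disjointness. The only divergence is local: where the paper shows that the pullback of $\bar a_k$ along $f$ is a split epimorphism, hence (being monic) an isomorphism, and then intersects with the $n$-th summand, you instead compute $X_n$ by pullback pasting as $X \times_{D_k}(D_k \times_A A_n)$, annihilate $D_k \times_A A_n$ via disjointness, and conclude with strictness of the initial object --- an equally valid and arguably more direct bookkeeping of the same ideas.
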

\begin{proof}
  It is our task to prove that every morphism $f: B \to A$ factorizing
  through the morphisms $\bar a_k$, for every $k \geq 1$, factorizes
  through $a_0$. Due to hyper-extensivity, $f$ has the form
  $f = \coprod_{n < \omega} f_n$ for morphisms $f_n: B_n \to A_n$ with
  $B = \coprod_{n <\omega} B_n$. We now prove that since $f$ factorizes through
  $\bar a_k$ it follows that $B_n \cong 0$ for all $1 \leq n <
  k$. Indeed, for any $k > 2$, let 
  \[
    \ol A_k = A_0 + \coprod\limits_{n \geq k} A_n,
    \qquad
    \ol B_k = B_0 + \coprod\limits_{n\geq k} B_n
    \qquad
    \text{and}
    \qquad
    \ol f_k = f_0 + \coprod_{n \geq k} f_n,
  \]
  and consider for $1 \leq n < k$ the pullback squares
  \[
    \xymatrix@+1pc{
      \ol B_k 
      \ar[r]^-{\ol b_k} 
      \ar[d]_{\ol f_k}
      & 
      B 
      \ar[d]_f
      \ar[ld]|-*+{\labelstyle f'}
      \ar@/^1.5ex/@{-->}[l]^-{h}
      & 
      B_n 
      \ar[l]_{b_n}
      \ar[d]^{f_n}
      \\
      \ol A_k 
      \ar[r]_-{\ol a_k}
      & A & A_n \ar[l]^-{a_n}
      }
      \qquad\qquad
      \xymatrix@+1pc{
        0 \ar[d] \ar[r] & B_n \ar[d]^{b_n} \\
        \ol B_k \ar[r]_-{\ol b_k} & B
      }
  \]
  Since $f$ factorizes through $\ol a_k$ we have the diagonal morphism
  $f'$ on the left such that the triangle below it commutes. Using
  the universal property of the left-hand pullback we then obtain a unique $h:
  B\to \ol B_k$ such that $\ol f_k \cdot  = f'$ and $\ol b_k
  \cdot h = \id_B$. This shows that the coproduct injection $\ol
  b_k$ is a split epimorphism, and since it is also a monomorphism by
  extensivity, we see that $\ol b_k$ is an isomorphism. Now consider the
  pullback on the right above, which expresses that the coproduct
  injections $b_n$ and $\ol b_k$ are disjoint. Since the morphism at
  the bottom is an isomorphism so is the morphism at the top, whence
  $B_n \cong 0$ for all $1 \leq n < k$.  

  Since this holds for every $k \geq 1$, we have shown that
  $B_n \cong 0$ for all $n \geq 1$. Thus, we obtain $B \cong B_0$ as
  desired.
\end{proof}
\fi% end iffull
\begin{theorem}\label{thm:coprod}
  The free cia on $Y$ is
  \[
    CY = H^*Y + \nu H
  \]
  with algebra structure 
  $\phi_Y + \ter^{-1}: H(H^*Y + \nu H) \cong HH^*Y + H(\nu H) \to H^*Y
  +\nu H$.
\end{theorem}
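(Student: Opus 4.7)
The plan is to identify $CY := H^*Y + \nu H$ with the terminal coalgebra for $H(-)+Y$. By Example~\ref{ex:cia}(1) that terminal coalgebra, when it exists, is the free cia on $Y$ with universal morphism the component $Y \to CY$, and by Lambek its algebra structure is the inverse of the coalgebra structure restricted to the $HCY$-summand. Preservation of the coproduct $H^*Y+\nu H$ by $H$ yields $HCY\cong HH^*Y + H(\nu H)$, under which that algebra structure precisely matches $\phi_Y + \ter^{-1}$. Thus it suffices to equip $CY$ with a coalgebra structure $\alpha_Y : CY \to HCY + Y$ and prove it terminal.

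I define $\alpha_Y$ as the inverse of the copairing
\[
  [\,\inl\cdot\phi_Y,\ \inr\cdot\ter^{-1},\ \inl\cdot\eta^F_Y\,] : HH^*Y + H(\nu H) + Y \longrightarrow H^*Y + \nu H,
\]
identified with a map $HCY + Y \to CY$ via coproduct-preservation by $H$. It is invertible because its ``$Y + HH^*Y$''-part gives $\sigma_Y^{-1}$ (see~\eqref{diag:tri}) and its ``$H(\nu H)$''-part gives $\ter^{-1}$. For terminality, given any coalgebra $e : X \to HX + Y$, I build a unique coalgebra morphism $h : X \to CY$ by decomposing $X$ according to how many iterations of $e$ are needed to exit into $Y$. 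Iterating $e$ and using coproduct-preservation yields, for each $n \ge 0$, a morphism
\[
  e^{(n)} : X \longrightarrow H^{n+1}X + \coprod_{k=0}^{n} H^k Y,
\]
and pullback-stability (condition~(1) of Definition~\ref{dfn:hyper}) gives a coproduct decomposition $X \cong X_\infty^{(n)} + \coprod_{k=0}^{n} X_k^{(n)}$ along the coproduct injections in the codomain. An induction shows $X_k^{(n)} = X_k$ is stable for $n \ge k$. Coherence (condition~(3)) then assembles the pairwise disjoint injections $X_k \hookrightarrow X$ into a single coproduct injection $\coprod_{k<\omega} X_k \hookrightarrow X$, and Lemma~\ref{lem:hyper} identifies its complementary summand $X_\infty$ with the intersection of the $X_\infty^{(n)}$; a diagram chase then shows that $e$ restricts on $X_\infty$ to a pure $H$-coalgebra $X_\infty \to HX_\infty$.

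I define $h$ piecewise: on $X_k$ as $X_k \to H^k Y \xrightarrow{j_k} H^*Y \xrightarrow{\inl} CY$ (the first arrow read off from $e^{(k)}$), and on $X_\infty$ as the unique $H$-coalgebra morphism $X_\infty \to \nu H$ post-composed with $\inr : \nu H \to CY$. Verifying that $h$ is an $H(-)+Y$-coalgebra morphism is a per-summand diagram chase using~\eqref{diag:tri} for $\sigma_Y$, the Lambek iso $\ter$, and coproduct-preservation by $H$. Uniqueness is analogous: any coalgebra morphism $g : X \to CY$ must, by unfolding $\alpha_Y$ exactly $k$ times, send $X_k$ into the summand $j_k(H^k Y) \subseteq H^*Y$, and must send $X_\infty$ into $\nu H$ by terminality of $\nu H$ as an $H$-coalgebra; the components are then forced to agree with those of $h$.

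The main obstacle is precisely establishing the countable decomposition $X = \coprod_{k<\omega} X_k + X_\infty$ and transferring a pure $H$-coalgebra structure onto $X_\infty$ in a merely hyper-extensive category: this is exactly where the coherence condition~(3) and Lemma~\ref{lem:hyper} become indispensable, the former for assembling countably many pairwise disjoint summands, and the latter for identifying $X_\infty$ as the ``limit of the never-exited parts''. Once the decomposition is in place, existence, uniqueness, and the coalgebra-morphism property of $h$ reduce to routine summand-by-summand diagram chases.
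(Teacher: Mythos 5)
Your proposal is correct and follows essentially the same route as the paper: reduce to showing that $H^*Y+\nu H$ with structure $\sigma_Y+\ter$ is the terminal coalgebra for $H(-)+Y$, decompose $X$ by ``exit time into $Y$'' via iterated pullbacks along coproduct injections, assemble the countably many disjoint summands by coherence, and define $h$ summandwise with the $X_\infty$-component factoring through the unique coalgebra morphism into $\nu H$. The only point to flag is in your uniqueness step: that any coalgebra morphism $g$ sends $X_\infty$ into $\nu H$ does not follow from terminality of $\nu H$ alone --- one must apply Lemma~\ref{lem:hyper} on the \emph{codomain} side as well, showing that $\inr\colon \nu H\to H^*Y+\nu H$ is the intersection of the injections $\coprod_{n\geq k}H^nY+\nu H\to H^*Y+\nu H$ and that $g\cdot i_\infty$ factors through each of them by unfolding, exactly as the paper does.
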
 
\iffull% Full proof only in full version
\begin{proof}
  In view of Example~\ref{ex:cia} it suffices to prove that the terminal
  coalgebra for $Y+H(-)$ is $H^*Y + \nu H$ with the following
  coalgebra structure
  \[
    H^*Y + \nu H \xrightarrow{\sigma_Y + \ter} Y + HH^* Y + H(\nu H)
    \cong Y + H(H^* Y + \nu H).
  \]
  This means that for a given
  coalgebra $e: X \to Y + HX$ there exists precisely one morphism $h:
  X \to H^*Y + \nu H$ such that the following square commutes: 
  \begin{equation}\label{diag:mor}
    \vcenter{
    \xymatrix@C+1pc{
      X \ar[r]^-h \ar[d]_e
      &
      H^* Y + \nu H
      \ar[d]^{\sigma_Y + \ter}
      \\
      Y + HX 
      \ar[r]_-{Y + Hh}
      &
      Y + H(H^*Y + \nu H)
    }}
  \end{equation}
  
  (a)~Uniqueness. We define countably many pairwise disjoint
  subobjects of $X$ and prove that $h$ is uniquely determined by the
  given equation morphism $e$ on each of them. That will conclude the
  proof of uniqueness since we will see that $X$ is the coproduct of
  all of those subobjects. To start, we put
  \[
    X_0 = X\qquad\text{and}\qquad e_0 = e,
  \]
  and denote the coproduct injections of $Y + HX$ by 
  \[
    HX \xrightarrow{i_0} Y + HX
    \qquad\text{and}\qquad
    Y \xrightarrow{\ol i_0} Y + HX.
  \]
  Next form the pullbacks of $e$ along these injections:
  \begin{equation}\label{diag:pull}
    \vcenter{
    \xymatrix{
      X_1 
      \ar[r]^-{i_1}
      \ar[d]_{e_1}
      &
      X_0
      \ar[d]^{e_0}
      &
      \ol X_1
      \ar[l]_-{\ol i_1}
      \ar[d]^{\ol e_1}
      \\
      HX \ar[r]_-{i_0} 
      & 
      Y + HX 
      &
      Y 
      \ar[l]^-{\ol i_0}
    }}
  \end{equation}
  By extensivity, $X = X_1 + \ol X_1$ with injections $i_1$ and $\ol i_1$. The component $\ol h_1 := h
  \cdot \ol i_1$ of $h$ at $\ol X_1$ is determined by $e$ as follows
  \[
    h \cdot \ol i_1 = \left(
    \ol X_1 \xrightarrow{\ol e_1}
    Y 
    \xrightarrow{j_0}
    H^*Y
    \xrightarrow{\inl}
    H^* Y + \nu H
    \right).
  \]
  This follows from the commutative diagram below (note that
  from~\eqref{diag:tri} we see that the right-hand and lower arrows
  compose to
  $Y \xrightarrow{j_0} H^*Y \xrightarrow{\inl} H^*Y + \nu H$):
  \begin{equation}\label{diag:ind1}
    \vcenter{
    \xymatrix{
      \ol X_1 
      \ar[r]^-{\ol i_1}
      \ar[d]_{\ol e_1}
      &
      X
      \ar[rr]^-h
      \ar[d]^e
      &&
      H^*Y + \nu H
      \\
      Y 
      \ar[r]_-{\ol i_0}
      &
      Y + HX
      \ar[rr]_-{Y + Hh}
      &&
      Y + H(H^* Y + \nu H)
      \ar[u]_{(\sigma_Y + \ter)^{-1} = \sigma_Y^{-1} + \ter^{-1}}
      }
    }
  \end{equation}
  In order to analyze the complementary coproduct component $h\o i_1$,
  we form the pullbacks of $e_1$ along the coproduct injections of
  $HX_0 = HX_1 + H\ol X_1$:
  \[
    \xymatrix{
      X_2
      \ar[r]^-{i_2}
      \ar[d]_{e_2}
      &
      X_1
      \ar[d]^{e_1}
      &
      \ol X_2
      \ar[d]^{\ol e_2}
      \ar[l]_-{\ol i_2}
      \\
      HX_1
      \ar[r]_-{Hi_1}
      &
      HX_0
      &
      H\ol X_1
      \ar[l]^-{H\ol i_1}
      }
  \]
  Then $X_1 = X_2 + \ol X_2$ and the component $\ol h_2 = h \o i_1 \o
  \ol i_2$ of $h$ at $\ol X_2$ is determined by $e$ as follows:
  \[
    h \o i_1 \o \ol i_2 = \left(
      \ol X_2
      \xrightarrow{\ol e_2}
      H\ol X_1
      \xrightarrow{H\ol e_1}
      HY
      \xrightarrow{j_1}
      H^* Y
      \xrightarrow{\inl}
      H^*Y + \nu H
      \right).
  \]
  This follows from the commutative diagram below
  (from~\eqref{diag:tri} we see that the right-hand and lower arrows
  compose to $HY \xrightarrow{j_1} H^*Y \xrightarrow{\inl} H^* Y +
  \nu H$):
  \begin{equation}\label{diag:ind2}
    \vcenter{
    \let\objectstyle=\labelstyle
    \xymatrix@C+1pc{
      \ol X_2
      \ar[r]^-{\ol i_2}
      \ar[d]_{\ol e_2}
      &
      X_1
      \ar[r]^-{i_1}
      \ar[d]_{e_1}
      &
      X
      \ar[rr]^h
      \ar[d]_e
      &&
      H^*Y + \nu H
      \\
      H\ol X_1 
      \ar[dd]_{H \ol e_1}
      \ar[r]^-{H\ol i_1}
      & 
      HX
      \ar[r]^-{i_0}
      \ar[d]_{He}
      &
      Y + HX
      \ar[rr]^-{Y + Hh}
      \ar[d]_{Y + He}
      &&
      Y + H(H^*Y + \nu H)
      \ar[u]^{(\sigma_Y + \ter)^{-1}}
      \\
      & 
      H(Y + HX)
      \ar@{=}[d]
      \ar[r]^-\inr
      &
      Y+H(Y + HX)
      \ar@{=}[d]
      \ar[rr]^-{Y + H(Y + Hh)}
      &&
      Y + H(Y + H(H^*Y + \nu H))
      \ar[u]^{Y + H(\sigma_Y + \ter)^{-1}}
      \\
      HY
      \ar[ru]^-{H\ol i_0}
      \ar[r]_-{\inl}
      &
      HY + HHX
      \ar[r]_-{\inr}
      &
      Y + HY + HHX
      \ar[rr]_-{Y + HY + HHh}
      &&
      Y + HY + HH(H^*Y + \nu H)
      \ar@{=}[u]
    }
    }
  \end{equation}
  We continue this process recursively: given a coproduct
  $X_n \xrightarrow{i_n} X_{n-1} \xleftarrow{\ol i_n} \ol X_n$
  and a morphism $e_n: X_n \to HX_{n-1}$ we form its pullbacks along
  the coproduct injection of $HX_{n-1} = HX_n + H\ol X_n$:
  \begin{equation}\label{diag:+}
    \vcenter{
      \xymatrix{
        X_{n+1} 
        \ar[r]^-{i_{n+1}} 
        \ar[d]_{e_{n+1}}
        & 
        X_n
        \ar[d]_{e_n}
        & 
        \ol X_{n+1} 
        \ar[l]_-{\ol i_{n+1}} 
        \ar[d]^{\ol e_{n+1}}
        \\
        HX_n
        \ar[r]_-{Hi_n}
        &
        HX_{n-1}
        &
        H\ol X_n
        \ar[l]^-{H\ol i_n}
      }
    }
  \end{equation}
  Since compositions of coproduct injections are always coproduct injections,
  we obtain coproduct injections
  \begin{equation}\label{eq:inj}
    \ol i_{n+1}^* = \left(
      \ol X_{n+1}
      \xrightarrow{\ol i_{n+1}}
      X_n
      \xrightarrow{i_n}
      X_{n+1}
      \xrightarrow{i_{n-1}}
      \cdots
      \xrightarrow{i_1}
      X
      \right)
      \qquad (n < \omega)
  \end{equation}
  and morphisms
  \begin{equation}\label{eq:mor}
    \wh e_{n+1} = \left(
      \ol X_{n+1}
      \xrightarrow{\ol e_{n+1}}
      H\ol X_n 
      \xrightarrow{H\ol e_{n}}
      H^2\ol X_{n-1}
      \xrightarrow{H^2\ol e_{n-1}}
      \cdots
      \xrightarrow{H^{n} \ol e_1}
      H^{n} Y
    \right)
    \qquad
    (n < \omega).
  \end{equation}
  The component $\ol h_{n+1} := (\ol X_{n+1} \xrightarrow{\ol i_{n+1}^*} X
  \xrightarrow{h} H^*Y + \nu H)$ of $h$ at $\ol X_{n+1}$ is determined
  by $e$ via the commutativity of the following square
  \begin{equation}\label{diag:deth}
    \vcenter{
      \xymatrix{
        \ol X_{n+1} 
        \ar[rr]^-{\ol i_{n+1}^*}
        \ar[d]_{\wh e_{n+1}}
        &&
        X
        \ar[d]^h 
        \\
        H^n Y \ar[r]_-{j_n}
        &
        H^* Y \ar[r]_-{\inl}
        &
        H^*Y + \nu H
      }
    }
  \end{equation}
  The proof is by an obvious inductive continuation of the
  diagrams~\eqref{diag:ind1} and~\eqref{diag:ind2}. 
  Observe also that by composing pullback squares we obtain the
  following pullback:
  \begin{equation}\label{diag:bigpb}
    \vcenter{
      \xymatrix@C-.2pc{
        \ol X_n
        \ar[r]^-{\ol i_n}
        \ar[d]_{\ol e_n}
        &
        X_{n-1}
        \ar[r]^-{i_{n-1}}
        \ar[d]_{e_{n-1}}
        &
        X_{n-2}
        \ar[r]^-{i_{n-2}}
        \ar[d]_{e_{n-2}}
        &
        \cdots
        \ar[r]^-{i_3}
        \ar@{}[d]|{\objectstyle\cdots}
        &
        X_2
        \ar[r]^-{i_2}
        \ar[d]_{e_2}
        &
        X_1
        \ar[r]^-{i_1}
        \ar[d]_{e_1}
        &
        X_0 = X
        \ar[d]^e
        \ar@{<-} `u[l] `[llllll]_-{\ol i_n^*}
        \\
        H\ol X_{n-1}
        \ar[r]_-{H\ol i_{n-1}}
        &
        HX_{n-2}
        \ar[r]_-{Hi_{n-2}}
        &
        HX_{n-3}
        \ar[r]_-{Hi_{n-3}}
        &
        \cdots
        \ar[r]_-{Hi_2}
        &
        HX_1
        \ar[r]_-{Hi_1}
        &
        HX_0
        \ar[r]_-{i_0}
        \ar@{<-} `d[l] `[lllll]^-{H\ol i_{n-1}^*}
        &
        Y + HX
      }
    }
  \end{equation}
  Now the coproduct injections in~\eqref{eq:inj} are clearly pairwise
  disjoint. Therefore, by hyper-extensivity, we have a coproduct
  injection $[\ol i_{n+1}^*]_{n<\omega}$ which we denote by 
  \[
    \ol X_\infty \xrightarrow{\ol i_\infty} X\qquad\text{for}\qquad
    \ol X_\infty := \coprod_{n < \omega} \ol X_{n+1},
  \]
  and $h \o \ol i_\infty$ is, as proved by~\eqref{diag:deth},
  determined by $e$. Now let $i_\infty: X_\infty \to X$ be the
  complementary coproduct component, i.e., we have the coproduct
  \[
    \ol X_\infty \xrightarrow{\ol i_\infty} 
    X 
    \xleftarrow{i_\infty} X_\infty.
  \]
  Since the pullbacks~\eqref{diag:bigpb} have pairwise
  disjoint coproduct injections as their upper arrows, they form
  together the pullback on the left below:
  \begin{equation}\label{diag:inf}
    \vcenter{
    \xymatrix{
      \ol X_\infty = \ol X_1 + \ol X_2 + \ol X_3 + \cdots
      \ar[r]^-{\ol i_\infty}
      \ar[d]_{\coprod \ol e_n}
      &
      X
      \ar[d]^e
      &
      X_\infty
      \ar[l]_-{i_\infty}
      \ar[dd]^{e_\infty}
      \\
      \qquad\qquad
      Y + H\ol X_1 + H\ol X_2 + \cdots
      \ar@{=}[d]
      \ar[r]^-{Y+ H\ol i_\infty}
      &
      Y + HX 
      \ar@{=}[d]
      \\
      Y + H\ol X_\infty
      \ar[r]_-{\inl}
      &
      Y + H\ol X_\infty + HX_\infty
      &
      HX_\infty
      \ar[l]^-{\inr}
      \ar[lu]_-{\inr \o H i_\infty}
    }
    }
  \end{equation}
  By extensivity, we obtain a morphism
  $e_\infty: X_\infty \to HX_\infty$ complementary to
  $\coprod \ol e_n$. This morphism is the structure of an
  $H$-coalgebra on $X_\infty$. Thus, in order to finish the proof of
  unicity of $h: X \to H^* Y + \nu H$ we only have to verify that the
  remaining coproduct component $h \o i_\infty$ is determined by
  $e$. To this end it suffices to prove that $h \o i_\infty$
  factorizes through the coproduct injections
  $\inr: \nu H \to H^*Y + \nu H$. Indeed, given a factorization
  $k: X_\infty \to \nu H$ such that the following square commutes:
  \begin{equation}\label{diag:k}
    \vcenter{
    \xymatrix{
      X_\infty
      \ar[r]^-{i_\infty} \ar[d]_k 
      & 
      X
      \ar[d]^h 
      \\
      \nu H
      \ar[r]_-{\inr} 
      &
      H^*Y + \nu H
    }
    }
  \end{equation}
  it follows that $k$ is the unique(!) $H$-coalgebra morphism from
  $e_\infty$ to $\ter$, i.e., the square below commutes:
  \begin{equation}\label{diag:kmor}
    \vcenter{
      \xymatrix{
        X_\infty \ar[r]^-k \ar[d]_{e_\infty} & \nu H \ar[d]^\ter \\
        HX_\infty \ar[r]_-{Hk} & H(\nu H)
      }
    }
  \end{equation}
  To see this consider the diagram below:
  \[
    \xymatrix{
      X \ar[ddd]_e \ar[rrr]^-h 
      &&& 
      H^*Y + \nu H 
      \ar[ddd]^{\sigma_Y + \ter}
      \\
      &
      X_\infty
      \ar[d]_{e_\infty} \ar[r]^-k 
      \ar[lu]_{i_\infty}
      \ar@{}[ru]|{\eqref{diag:k}}
      \ar@{}[ld]|{\eqref{diag:inf}}
      &
      \nu H
      \ar[ru]_-\inr
      \ar[d]^{\ter}
      \\
      &
      HX_\infty
      \ar[ld]_-{\inr \o Hi_\infty}
      \ar[r]^-{Hk}
      \ar@{}[rd]|{\eqref{diag:k}}
      &
      H(\nu H)
      \ar[rd]^-{\inr}
      \\
      Y + HX 
      \ar[rrr]_-{Y + Hh}
      &&&
      Y + HH^*Y + H(\nu H)
    }
  \]
  Its outside is the square~\eqref{diag:mor}, and all inner parts,
  except perhaps the inner square, commute. Thus, that square also
  commutes since the coproduct injection $\inr$ is monomorphic (see
  Definition~\ref{dfn:hyper}).

  The proof that $h \o i_\infty$ factorizes throught
  $\inr: \nu H \to H^*Y + \nu H$ is based on Lemma~\ref{lem:hyper},
  which shows that $\inr: \nu H \to H^*Y + \nu H$ is the intersection
  of the following coproduct injections
  \[
    b_k= \left(
      \coprod_{n \geq k} H^n Y + \nu H 
      \xrightarrow{[j_n]_{n \geq k} + \nu H} 
      H^*Y + \nu H
      \right)
    \qquad
    (k \geq 1).
  \]
  Thus, we only need to verify that $h\o i_\infty$ factorizes through every
  $b_k$. For $k = 1$ consider the diagram below:
  \[
    \xymatrix@C+0.5pc{
      X_\infty
      \ar[r]^-{i_\infty}
      \ar[dd]_{e_\infty}
      &
      X
      \ar[d]_e
      \ar[r]^-h
      &
      H^*Y + \nu H
      \ar@{<-} `r[rd] `[dd]^{b_1} [dd]
      &
      \\
      &
      Y + HX
      \ar[r]^-{Y+ Hh}
      &
      Y + HH^*Y + H(\nu H)
      \ar[u]_{(\sigma_Y + \ter)^{-1} = \sigma_Y^{-1} + t^{-1}}
      &
      \\
      HX_\infty
      \ar[r]^-{H i_\infty}
      &
      HX \ar[u]^\inr
      \ar[r]_-{Hh}
      &
      HH^*Y + H(\nu H)
      \ar[u]_{\inr}
      &
    }
  \]
  The right-hand part commutes by~\eqref{diag:tri}, for the left-hand
  part see the upper right-hand part of~\eqref{diag:inf}, the upper
  middle part commutes by~\eqref{diag:mor} and the remaining lower
  middle part trivially commutes.
  
  Given a factorization of $h \o i_\infty$ through $b_k$ via $f$, then
  $H(h \o i_\infty)$ factorizes through $Hb_k$ via $Hf$. Using this we
  conclude that $h \o i_\infty$ factorizes through $b_{k+1}$ using the
  diagram below:
    \[
    \xymatrix{
      X_\infty
      \ar[r]^-{i_\infty}
      \ar[ddd]_{e_\infty}
      &
      X
      \ar[d]_e
      \ar[r]^-h
      &
      H^*Y + \nu H
      &
      \\
      &
      Y + HX
      \ar[r]^-{Y+ Hh}
      &
      Y + HH^*Y + H(\nu H)
      \ar[u]_{(\sigma_Y + \ter)^{-1} = \sigma_Y^{-1} + \ter^{-1}}
      &
      \\
      &
      HX \ar[u]^\inr
      \ar[r]_-{Hh}
      &
      HH^*Y + H(\nu H)
      \ar[u]_{\inr}
      &
      \coprod\limits_{n\geq k+1} H^n Y + \nu H
      \ar  `u[uul]_{b_{k+1}} [uul]
      \\
      HX_\infty
      \ar[ru]^-{Hi_\infty}
      \ar[r]_-{Hf}
      &
      H\left(\coprod\limits_{n\geq k} H^n Y + \nu H\right)
      \ar[ru]_-{Hb_k}
      \ar[r]_-{\cong}
      &
      \coprod\limits_{n\geq k} H^{n+1} Y + H(\nu H)
      \ar[ru]_{\id + \ter^{-1}}
    }
  \]
  All its inner parts, except perhaps the right-hand one clearly
  commute. For the remaining right-hand part, we consider the
  components of the coproduct in its lower left-hand corner
  separately: the right-hand component with domain $H(\nu H)$ has
  $\ter^{-1}$ on both paths. We further consider the components of
  $H(\coprod_{n \geq k} H^nY)$ with the help of the diagram below:
  \[
    \xymatrix{
      H(H^n Y) 
      \ar[d]_(.4){H\inj_n}
      \ar[rd]^{Hj_n}
      \ar@{=}[rr]
      &&
      H^{n+1} Y
      \ar[r]^-{\inj_n}
      \ar[dd]_{j_{n+1}}
      \ar[rd]_{\inj_{n+1}}
      &
      \coprod\limits_{n\geq k} H^{n+1}Y + \nu H
      \ar[d]^{\id + \ter^{-1}}
      \\
      H\left(\coprod\limits_{n\geq k} H^n Y + \nu H\right)
      \ar[d]_{Hb_k}
      &
      HH^*Y 
      \ar[ld]_\inl
      \ar[d]^\inr
      &&
      \coprod\limits_{n\geq k} H^n Y + \nu H
      \ar[dd]^{b_{k+1}}
      \\
      HH^* Y + H(\nu H)
      \ar[d]_\inr
      &
      Y + HH^*Y 
      \ar[ld]_\inl
      \ar@<-3pt>[r]_-{\sigma_Y^{-1}}
      &
      H^*Y
      \ar@<-3pt>[l]_-{\sigma_Y}
      \ar[rd]^{\inl}
      \\
      Y + HH^*Y + H(\nu H)
      \ar[rrr]_-{\sigma_Y^{-1} + \ter^{-1}}
      &&&
      H^*Y + \nu H
    }
  \]
  Its upper central part commutes by~\eqref{diag:tri}, the left-hand
  triangle commutes by the definition of $b_k$ and the right-hand
  rhombus by the definition of $b_{k+1}$; all other inner parts
  clearly commute.

  We conclude that $h$ is unique since it is equal to 
  \[
    X = \coprod_{n\geq 1} \ol X_n + X_\infty 
    \xrightarrow{[\ol h_n] + k} H^*Y + \nu H.
    \
  \]

  (b)~Existence: For the given coalgebra $e$ we define $i_n$,
  $\ol i_n$, $e_n$ and $\ol e_n$ by~\eqref{diag:pull}
  and~\eqref{diag:+}, and we also define
  $e_\infty: X_\infty \to HX_\infty$ by~\eqref{diag:inf} where
  $X = \ol X_\infty + X_\infty$ with
  $\ol X_\infty = \coprod_{n\geq 1} \ol X_n$. We furthermore use
  notations~\eqref{eq:inj} and~\eqref{eq:mor}.

  Define $k: X_\infty \to \nu H$ by~\eqref{diag:k} and for all $n \geq 1$ put
  \begin{equation}\label{eq:barh}
    \ol h_n = \left(
      \ol X_n \xrightarrow{\wh e_n} H^nY \xrightarrow{j_n} H^* Y
    \right).
  \end{equation}
  We prove that $[\ol h_n] + k: X \to H^*Y + \nu H$
  is a coalgebra morphism for $Y + H(-)$, i.e., the
  square below commutes:
  \[
    \xymatrix@C+1pc{
      X
      \ar@{=}[r]
      \ar[d]_e
      &
      \coprod_{n\geq 1} \ol X_n + X_\infty
      \ar[rr]^-{[\ol h_n] + k}
      \ar[d]_{\coprod_{n\geq 1} \ol e_n + e_\infty}
      &&
      H^* Y + \nu H
      \ar[d]^{\sigma_Y + \ter}
      \\
      Y + HX 
      \ar@{=}[r] 
      &
      Y + \coprod_{n \geq 1} H\ol X_n + HX_\infty
      \ar[rr]_-{Y + [H\ol h_n] + Hk}
      &&
      Y + H H^*Y + H(\nu H)
    }
  \]
  Its right-hand coproduct component with domain $X_\infty$ is the
  square~\eqref{diag:k} defining $k$ by the commutativity of the
  right-hand part of~\eqref{diag:inf}.

  Let us verify that the coproduct components with domain $\ol X_n$
  commute. We proceed by induction on $n$. For the base case we
  obtain the following commutative diagram (for the right-hand
  triangle see~\eqref{diag:tri}, and for the left-hand one see~\eqref{diag:pull}):
  \[
    \xymatrix{
      \ol X_1
      \ar[r]^-{\ol e_1}
      \ar[rd]^-{\ol e_1}
      \ar[d]_{\ol i_0}
      &
      Y
      \ar[r]^-{j_0}
      \ar[rdd]^-{\inl}
      &
      H^* Y
      \ar[dd]^{\sigma_Y}
      \ar@{<-} `u[l] `[ll]_-{\ol h_1}
      \\
      X=X_0
      \ar[d]_{e = e_0}
      &
      Y
      \ar[rd]_-{\inl}
      \ar[ld]^-{\ol i_0 = \inl}
      \\
      Y + HX
      \ar[rr]_-{Y + [H\ol h_n]}
      &&
      Y + HH^*Y
    }
  \]
  For the induction step with $n > 1$ consider the diagram below:
  \[
    \xymatrix{
      \ol X_n
      \ar[rr]^-{\hat e_n}
      \ar[rd]^-{\ol e_n}
      \ar[d]_{\ol i_n^*}
      &&
      H^n Y
      \ar[r]^-{j_n}
      \ar[d]^{Hj_{n-1}}
      &
      H^* Y
      \ar@{<-} `u[l] `[lll]_-{\ol h_n}
      \ar[ddd]^{\sigma_Y}
      \\
      X
      \ar[dd]_e
      &
      H\ol X_{n-1}
      \ar[ru]^-{H\hat e_{n-1}}
      \ar[r]_-{H\ol h_{n-1}}
      \ar[d]^{H\ol i_{n-1}^*}
      &
      HH^*Y
      \ar[rdd]^-\inr
      \\
      &
      HX
      \ar[ld]_-{i_0 = \inr}
      \ar[rrd]^-{[H\ol h_n]}
      \\
      Y + HX
      \ar[rrr]_-{[Y + [H\ol h_n]]}
      &&&
      Y + HH^*Y
    }
  \]
  The upper part and the middle triangle under it commute by~\eqref{eq:barh},
  the upper left-hand triangle follows immediately from~\eqref{eq:mor}. The
  right-hand part commutes by~\eqref{diag:tri}, and the left-hand part
  is the outside of~\eqref{diag:bigpb}. The remaining parts clearly commute.
\end{proof}
\else% short proofsketch 
\begin{proof}[Proofsketch]
  In view of Example~\ref{ex:cia} it suffices to prove that the terminal
  coalgebra for $Y+H(-)$ is $H^*Y + \nu H$ with the following
  coalgebra structure
  \[
    H^*Y + \nu H \xrightarrow{\sigma_Y + \ter} Y + HH^* Y + H(\nu H)
    \cong Y + H(H^* Y + \nu H).
  \]
  That means that for a given coalgebra $e: X \to Y + HX$ there exists
  precisely one coalgebra morphism $h: X \to H^*Y + \nu H$.  This
  morphism $h$ is defined by an iterative construction using pullbacks
  and (hyper-)extensivity that we now explain.

  Let $X_0 = X$ and $e_0 = e$ and denote the coproduct injections of
  $Y + HX$ by $i_0: HX \to Y + HX$ and
  $\ol i_0: Y \to Y + HX$.  Next form the pullbacks
  of $e$ along these injections:
  \begin{equation}\label{diag:pull}
    \vcenter{
    \xymatrix{
      X_1 
      \ar[r]^-{i_1}
      \ar[d]_{e_1}
      &
      X_0
      \ar[d]^{e_0}
      &
      \ol X_1
      \ar[l]_-{\ol i_1}
      \ar[d]^{\ol e_1}
      \\
      HX \ar[r]_-{i_0} 
      & 
      Y + HX 
      &
      Y 
      \ar[l]^-{\ol i_0}
    }}
  \end{equation}
  By extensivity, $X = X_1 + \ol X_1$ with injections $i_1$ and
  $\ol i_1$. The component $\ol h_1 := h \cdot \ol i_1$ of $h$ at
  $\ol X_1$ is defined by
  \[
    h \cdot \ol i_1 = \left(
    \ol X_1 \xrightarrow{\ol e_1}
    Y 
    \xrightarrow{j_0}
    H^*Y
    \xrightarrow{\inl}
    H^* Y + \nu H
    \right).
  \]
    In order to analyze the complementary coproduct component $h\o i_1$,
  we form the pullbacks of $e_1$ along the coproduct injections of
  $HX_0 = HX_1 + H\ol X_1$:
  \[
    \xymatrix{
      X_2
      \ar[r]^-{i_2}
      \ar[d]_{e_2}
      &
      X_1
      \ar[d]^{e_1}
      &
      \ol X_2
      \ar[d]^{\ol e_2}
      \ar[l]_-{\ol i_2}
      \\
      HX_1
      \ar[r]_-{Hi_1}
      &
      HX_0
      &
      H\ol X_1
      \ar[l]^-{H\ol i_1}
      }
  \]
  Then $X_1 = X_2 + \ol X_2$ and the component $\ol h_2 = h \o i_1 \o
  \ol i_2$ of $h$ at $\ol X_2$ is defined by
  \[
    h \o i_1 \o \ol i_2 = \left(
      \ol X_2
      \xrightarrow{\ol e_2}
      H\ol X_1
      \xrightarrow{H\ol e_1}
      HY
      \xrightarrow{j_1}
      H^* Y
      \xrightarrow{\inl}
      H^*Y + \nu H
      \right).
  \]
    We continue this process recursively: given a coproduct
  $X_n \xrightarrow{i_n} X_{n-1} \xleftarrow{\ol i_n} \ol X_n$
  and a morphism $e_n: X_n \to HX_{n-1}$ we form its pullbacks along
  the coproduct injection of $HX_{n-1} = HX_n + H\ol X_n$:
  \begin{equation}\label{diag:+}
    \vcenter{
      \xymatrix{
        X_{n+1} 
        \ar[r]^-{i_{n+1}} 
        \ar[d]_{e_{n+1}}
        & 
        X_n
        \ar[d]_{e_n}
        & 
        \ol X_{n+1} 
        \ar[l]_-{\ol i_{n+1}} 
        \ar[d]^{\ol e_{n+1}}
        \\
        HX_n
        \ar[r]_-{Hi_n}
        &
        HX_{n-1}
        &
        H\ol X_n
        \ar[l]^-{H\ol i_n}
      }
    }
  \end{equation}
  Since compositions of coproduct injections are always coproduct injections, 
  we obtain coproduct injections
  \begin{equation}\label{eq:inj}
    \ol i_{n+1}^* = \left(
      \ol X_{n+1}
      \xrightarrow{\ol i_{n+1}}
      X_n
      \xrightarrow{i_n}
      X_{n+1}
      \xrightarrow{i_{n-1}}
      \cdots
      \xrightarrow{i_1}
      X
      \right)
      \qquad (n < \omega)
  \end{equation}
  and morphisms
  \begin{equation}\label{eq:mor}
    \wh e_{n+1} = \left(
      \ol X_{n+1}
      \xrightarrow{\ol e_{n+1}}
      H\ol X_n 
      \xrightarrow{H\ol e_{n}}
      H^2\ol X_{n-1}
      \xrightarrow{H^2\ol e_{n-1}}
      \cdots
      \xrightarrow{H^{n} \ol e_1}
      H^{n} Y
    \right)
    \qquad
    (n < \omega).
  \end{equation}
  The component $\ol h_{n+1} := (\ol X_{n+1} \xrightarrow{\ol i_{n+1}^*} X
  \xrightarrow{h} H^*Y + \nu H)$ of $h$ at $\ol X_{n+1}$ is defined by
  the commutativity of the following square
  \begin{equation}\label{diag:deth}
    \vcenter{
      \xymatrix{
        \ol X_{n+1} 
        \ar[rr]^-{\ol i_{n+1}^*}
        \ar[d]_{\wh e_{n+1}}
        &&
        X
        \ar[d]^h 
        \\
        H^n Y \ar[r]_-{j_n}
        &
        H^* Y \ar[r]_-{\inl}
        &
        H^*Y + \nu H
      }
    }
  \end{equation}

  Observe also that by composing pullback squares we obtain the
  following pullback:
  \begin{equation}\label{diag:bigpb}
    \vcenter{
      \xymatrix@C-.2pc{
        \ol X_n
        \ar[r]^-{\ol i_n}
        \ar[d]_{\ol e_n}
        &
        X_{n-1}
        \ar[r]^-{i_{n-1}}
        \ar[d]_{e_{n-1}}
        &
        X_{n-2}
        \ar[r]^-{i_{n-2}}
        \ar[d]_{e_{n-2}}
        &
        \cdots
        \ar[r]^-{i_3}
        \ar@{}[d]|{\objectstyle\cdots}
        &
        X_2
        \ar[r]^-{i_2}
        \ar[d]_{e_2}
        &
        X_1
        \ar[r]^-{i_1}
        \ar[d]_{e_1}
        &
        X_0 = X
        \ar[d]^e
        \ar@{<-} `u[l] `[llllll]_-{\ol i_n^*}
        \\
        H\ol X_{n-1}
        \ar[r]_-{H\ol i_{n-1}}
        &
        HX_{n-2}
        \ar[r]_-{Hi_{n-2}}
        &
        HX_{n-3}
        \ar[r]_-{Hi_{n-3}}
        &
        \cdots
        \ar[r]_-{Hi_2}
        &
        HX_1
        \ar[r]_-{Hi_1}
        &
        HX_0
        \ar[r]_-{i_0}
        \ar@{<-} `d[l] `[lllll]^-{H\ol i_{n-1}^*}
        &
        Y + HX
      }
    }
  \end{equation}
  Now the coproduct injections in~\eqref{eq:inj} are clearly pairwise
  disjoint. Therefore, by hyper-extensivity, we have a coproduct
  injection $[\ol i_{n+1}^*]_{n<\omega}$ which we denote by
  \[
    \ol X_\infty \xrightarrow{\ol i_\infty} X\qquad\text{for}\qquad
    \ol X_\infty := \coprod_{n < \omega} \ol X_{n+1},
  \]
  and $h \o \ol i_\infty$ is defined coponentwise
  by~\eqref{diag:deth}.  By hyper-extensivity we can consider the
  complementary coproduct component $i_\infty: X_\infty \to X$,
  i.e., we have the coproduct
  \[
    \ol X_\infty \xrightarrow{\ol i_\infty} 
    X 
    \xleftarrow{i_\infty} X_\infty.
  \]
  Since the pullbacks~\eqref{diag:bigpb} have pairwise
  disjoint coproduct injections as their upper arrows, they form
  together the pullback on the left below:
  \begin{equation}\label{diag:inf}
    \vcenter{
    \xymatrix{
      \ol X_\infty = \ol X_1 + \ol X_2 + \ol X_3 + \cdots
      \ar[r]^-{\ol i_\infty}
      \ar[d]_{\coprod \ol e_n}
      &
      X
      \ar[d]^e
      &
      X_\infty
      \ar[l]_-{i_\infty}
      \ar[dd]^{e_\infty}
      \\
      \qquad\qquad
      Y + H\ol X_1 + H\ol X_2 + \cdots
      \ar@{=}[d]
      \ar[r]^-{Y+ H\ol i_\infty}
      &
      Y + HX 
      \ar@{=}[d]
      \\
      Y + H\ol X_\infty
      \ar[r]_-{\inl}
      &
      Y + H\ol X_\infty + HX_\infty
      &
      HX_\infty
      \ar[l]^-{\inr}
      \ar[lu]_-{\inr \o H i_\infty}
    }
    }
  \end{equation}
  By extensivity, we obtain a morphism
  $e_\infty: X_\infty \to HX_\infty$ complementary to
  $\coprod \ol e_n$. This morphism is the structure of an
  $H$-coalgebra on $X_\infty$. Thus, we define $h \o i_\infty$ to be
  the unique coalgebra morphism from $X_\infty$ to $\nu H$. 

  One now verifies that the morphism $h: X \to H^*Y + \nu H$ so
  defined is a unique coalgebra morphism for $Y+H(-)$ as desired (see
  the full version~\cite{am17} of our paper for details).
\end{proof}
\fi% end iffull\else

\begin{example}
  \label{ex:nat}
  \begin{enumerate}[(a)]
  \item It is well-known that the identity functor on $\Set$ has the
    free cias (equivalently, final coalgebras for $(-) + Y$)
    $TY = \N \times Y + 1$ where $\N$ is the set of natural
    numbers. It follows from Theorem~\ref{thm:coprod} that the same
    formula holds in every hyper-extensive category with a terminal
    object $1$.  To see this, one first shows that
  \[
    N := \coprod_{n< \omega} 1
    \quad
    \text{with}
    \quad
    \xymatrix@1@C+2pc{
      1 \ar[r]^-{\inj_0} 
      & 
      N 
      & 
      N \ar[l]_-{[\inj_{n+1}]_{n < \omega}}
    }
  \]
  forms a natural number object, i.e., an initial algebra for $1 +
  (-)$. Using distributivity we see that for any object $Y$ the free
  algebra $\Id^* Y$ is
  \begin{equation}\label{eq:Idstar}
   \Id^* Y 
    = 
    \coprod_{n < \omega} Y 
    \cong 
    \left(\coprod_{n < \omega} 1 \right) \times Y = N \times Y.
  \end{equation}
  Finally, we clearly have $\nu \Id = 1$. By Theorem~\ref{thm:coprod},
  we thus obtain 
  \[
    TY \cong N \times Y + 1.
  \]
\item For the above formula giving the free cia for $\Id$ on every $Y$
  it is \emph{not} sufficient that $\C$ be an extensive category. As a
  counterexample consider the category $\C = \CH$ of compact Hausdorff
  spaces. Its limits and finite coproducts are created by the
  forgetful functor into $\Set$, thus $\CH$ is extensive. However, it
  is not hyper-extensive since countable coproducts are not universal.
  For $Y = 1$ (the one point space) the formula~\eqref{eq:Idstar}
  gives an uncountable space since $\coprod_{n < \omega} 1$ is the
  Stone-\v{C}ech compactification of an infinite discrete
  space. However, in the notation of Example~\ref{ex:corec}, $T1$ is a
  countable space; for the terminal $\omega^\op$-chain
  \[
    1 \leftarrow 1 + 1 \leftarrow 1 + 1 +1 \leftarrow \cdots
  \]
  of the functor $\Id + 1$ on $\CH$ has the corresponding underlying
  chain in $\Set$. The limit in $\Set$ is countable, giving the set $N
  + 1$. The limit in $\CH$ is then a compact space on this set, in
  fact, it is the one-point compactification of the discrete space on
  $N$. Since the functor $X \mapsto X + 1$ preserves this limit, it is
  its terminal coalgebra. That means that $T1$ is countable.
\end{enumerate}
\end{example}

\begin{example}
  Extending Example~\ref{ex:nat}(a), we know that the functor $HX =
  \Sigma \times X$ on $\Set$ has the free cias $TY = \Sigma^* \times Y +
  \Sigma^\omega$, where $\Sigma^*$ and $\Sigma^\infty$ are the
  usual sets of strings (words) and sequences (streams) on $\Sigma$. 

  It follows from Theorem~\ref{thm:coprod} that the same formula holds
  in every hyper-extensive category $\C$ with finite products
  commuting with countable coproducts.  Examples of such categories
  are presheaf categories, posets, graphs and unary algebras.

  Given an object $\Sigma$ of $\C$, the functor $H X = \Sigma \times X$ has 
  the terminal coalgebra
  \[
    \Sigma^\omega = \lim\limits_{n < \omega} \Sigma^n
  \]
  which is the limit of the $\omega^{\op}$-chain of projections as
  follows:
  \[
    1 \xleftarrow{!} \Sigma \xleftarrow{\Sigma \times !} \Sigma \times
    \Sigma \xleftarrow{\Sigma \times \Sigma \times !} \Sigma \times
    \Sigma \times \Sigma \leftarrow \cdots
  \]
  The free algebras $H^* Y$ are obtained as follows: define 
  \[
    \Sigma^* = \coprod\limits_{n< \omega} \Sigma^n.
   \]
  Then $H^*Y = \Sigma^* \times Y$. Thus, according to
  Theorem~\ref{thm:coprod}, the free cia for $H$ on $Y$ is given by
  \[
    TY = \Sigma^* \times Y + \Sigma^\omega.
  \]
 
  Similarly, given another object $W$ of $\C$, the functor
  $H'X = W + \Sigma \times X$ has the free cias
  $T'Y = \Sigma^* \times (W + Y) + \Sigma^\omega$.
\end{example}
\begin{example}\label{E:ultra}
In Theorem~\ref{thm:coprod} it is not sufficient that $H$
    preserves finite coproducts. In fact, consider the ultrafilter
    functor $U: \Set \to \Set$ which assigns to every set $X$ the set
    of all ultrafilters on $X$ and to a map $f: X \to Y$ the map $Uf$
    sending an ultrafilter $\mathcal A$ on X to
    $\{B \subseteq Y \mid f^{-1}(B) \in \mathcal A\}$. It preserves
    finite coproducts and $\nu U = 1$. But for $Y$ infinite,
    $Y + U(-)$ has no fixed points; for suppose that
    $TY \cong Y + UTY$, then $TY$ must be infinite since $Y$ is so and
    therefore $|TY| < |UTY|$ contradicting the isomorphism.
\end{example}

\section{Corecursiveness vs.~Complete Iterativity}

Under Assumption~\ref{ass:3} we prove in this
section that $H$ is a cia functor, i.e., every
corecursive algebra is a cia. Let $a: HA \to A$ be a fixed algebra.
\begin{notation}\label{not:an}
  \begin{enumerate}[(1)]
  \item Define morphisms 
    \[
      a^n: H^n A \to A
    \]
    by the following induction:
    \[
      a^0 = \id_A\qquad\text{and}\qquad 
      a^{n+1} = (H^{n+1} A = HH^nA \xrightarrow{Ha^n} HA \xrightarrow{a} A).
    \]
  \item For every equation morphism $e: X \to HX + A$ we use the
    notation of the proof of Theorem~\ref{thm:coprod}, except that $Y$
    is replaced by $A$ everywhere (and the order of summands is swapped).
    % since both $HX + A$ and $A + HX$ are coproducts of
%    $HX$ and $A$, respectively)
    %.\smnote{I vote for not turning
      %coproduct components around in the previous section; it will be
      %too much useless typing work, and I believe our readers will
     % survive.} 
     Thus we use the morphisms
    \[
      i_n, \ol i_n, e_n, \ol e_n, e_\infty, i_\infty, \ol i_\infty, \wh e_n,\
      \text{and}, \ol i_n^*
    \]
    as in that proof.
\takeout{% never used
  \item We denote by $\fin e: X_\infty \to \nu H$ the unique coalgebra
    morphism:
    \begin{equation}\label{diag:hash}
      \vcenter{
        \xymatrix{
          X_\infty
          \ar[d]_{e_\infty} \ar[r]^-{\fin e} & \nu H \ar[d]^t \\
          H X_\infty \ar[r]_-{H\fin e} & H(\nu H)
        }}
      \end{equation}
}% end takeout
  \end{enumerate}
\end{notation}

\begin{construction}\label{constr:sol}
  Let $a: HA \to A$ be an algebra. Given an equation morphism $e: X \to HX + A$
  and a coalgebra-to-algebra morphism $s: X_\infty \to A$:
  \begin{equation}\label{diag:coalalg}
    \vcenter{
      \xymatrix{
        X_\infty \ar[r]^-s \ar[d]_{e_\infty} & A \\
        HX_\infty \ar[r]_-{Hs} & HA \ar[u]_a
      }
    }
  \end{equation}
  we define a morphism $\sol e_s: X \to A$ on the components of the
  coproduct $X = \left(\coprod_{n \geq 1} \ol X_n\right) + X_\infty$ (with injections
  $\ol i_n^*$, for every $n \geq 1$, and $i_\infty$) separately as follows:
  \begin{equation}\label{diag:comp}
    \vcenter{
      \xymatrix{
        \ol X_n \ar[r]^-{\wh e_n} \ar[d]_{\ol i_n^*} & H^{n-1} A \ar[d]^{a^{n-1}}\\
        X \ar[r]_-{\sol e_s} & A
      }}
      \quad
      \text{for $n \geq 1$, and}
      \quad
      \vcenter{
      \xymatrix{
        X_\infty \ar[d]_{i_\infty} \ar[rd]^s \\
        X \ar[r]_-{\sol e_s} & A
        }
      }
  \end{equation}
\end{construction}
\begin{proposition}\label{prop:uniq}
  The morphism $\sol e_s$ is a solution of $e$. Moreover, every
  solution of $e$ is of the form $\sol e_s$ for some coalgebra-to-algebra
  morphism $s$.
\end{proposition}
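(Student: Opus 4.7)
My plan is to exploit the coproduct decomposition $X = \bigl(\coprod_{n \geq 1} \ol X_n\bigr) + X_\infty$ set up in the proof of Theorem~\ref{thm:coprod}, and simply verify the solution equation $\sol e = [a, A] \o (H\sol e + A) \o e$ componentwise. The crucial ingredients are already in place: by~\eqref{diag:bigpb} (adjusted for the swapped summand order) we have $e \o \ol i_1 = \inr \o \ol e_1$ and $e \o \ol i_n^* = \inl \o H\ol i_{n-1}^* \o \ol e_n$ for $n \geq 2$, and by the right-hand part of~\eqref{diag:inf} we have $e \o i_\infty = \inl \o H i_\infty \o e_\infty$. I will also use the trivial recursion $\wh e_n = H\wh e_{n-1} \o \ol e_n$, which is immediate from~\eqref{eq:mor}.

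For the first assertion, on the summand $i_\infty$ the coalgebra-to-algebra law~\eqref{diag:coalalg} for $s$ gives $[a,A] \o (H\sol e_s + A) \o e \o i_\infty = a \o Hs \o e_\infty = s$, matching the definition $\sol e_s \o i_\infty = s$. On the summand $\ol i_n^*$ I expect a short induction: for $n = 1$ both sides reduce to $\ol e_1 = a^0 \o \wh e_1$, and for $n \geq 2$ the recursion gives
\[
a \o H(a^{n-2} \o \wh e_{n-1}) \o \ol e_n = a^{n-1} \o \wh e_n,
\]
again matching the definition~\eqref{diag:comp}.

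For the converse direction, given an arbitrary solution $\sol e$ I will set $s := \sol e \o i_\infty$. Precomposing the solution equation with $i_\infty$ and using $e \o i_\infty = \inl \o H i_\infty \o e_\infty$ yields $s = a \o Hs \o e_\infty$, so $s$ is indeed a coalgebra-to-algebra morphism. Finally, a straightforward induction on $n$ using the same formulas as above forces $\sol e \o \ol i_n^* = a^{n-1} \o \wh e_n$ for every $n \geq 1$: the base case is $\sol e \o \ol i_1 = \ol e_1$, and the step uses $e \o \ol i_n^* = \inl \o H\ol i_{n-1}^* \o \ol e_n$ together with the inductive hypothesis pushed inside $H$. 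Hence $\sol e = \sol e_s$.

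No genuine obstacle stands in the way: all the hard combinatorial work has been done in Theorem~\ref{thm:coprod}, which produced both the hyper-extensive decomposition of $X$ and the coalgebra structure $e_\infty$ on $X_\infty$. The only point requiring mild attention is the bookkeeping with $\inl$ and $\inr$ now that the summand order in $HX + A$ is reversed relative to the statement of Theorem~\ref{thm:coprod}.
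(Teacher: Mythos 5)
Your proposal is correct and follows essentially the same route as the paper's proof: a componentwise verification of the solution square over the decomposition $X = \coprod_{n\geq 1}\ol X_n + X_\infty$, using~\eqref{diag:pull}, \eqref{diag:bigpb} and~\eqref{diag:inf} together with the recursion $\wh e_n = H\wh e_{n-1}\o\ol e_n$ and the definition of $a^{n}$, and for the converse setting $s = \sol e\o i_\infty$ and inducting on $n$. The only difference is presentational (equational reasoning in place of the paper's large commutative diagrams), and your remark about the swapped summand order is exactly the bookkeeping point the paper also flags.
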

\iffull% Proof only in full version
\begin{proof}
  (1)~We verify the commutativity of~\eqref{diag:sol} for $\sol e_s$
  by considering the coproduct components of
  $X = \coprod_{n\geq 1} \ol X_n + X_\infty$ separately. For the
  components $\ol X_n$ we proceed by induction on $n$. For the base
  case $n = 1$ we have the diagram below:
\begin{equation}\label{diag:base}
\vcenter{
  \xymatrix{
    \ol X_1 \ar[rrr]^-{\ol e_1 = \wh e_1} \ar[ddd]_{\ol e_1}
    \ar[rd]^-{\ol i_1^* = \ol i_1}
    &&&
    A
    \ar@{=}[ld] 
    \\
    &
    X
    \ar[d]_e \ar[r]^-{\sol e_s}
    &
    A
    \\
    &
    HX + A
    \ar[r]_-{H\sol e_s + A}
    &
    HA + A \ar[u]_{[a, A]}
    \\
    A \ar[ru]_-{\ol i_0 = \inr}
    \ar@{=}[rrr]
    &&&
    A \ar@{=}[lu]
    \ar@{=}[uuu]
  }
}
\end{equation}
This commutes as follows: its left-hand part is the right-hand square
of~\eqref{diag:pull}, its upper part commutes by~\eqref{diag:comp} and
the lower and right-hand parts are trivial; since the outside also
trivially commutes so does the inner square when precomposed by $\ol
i_1^*$ as desired. 

For the induction step with $n > 1$ we consider the following diagram:
\begin{equation}\label{diag:step}
  \vcenter{
  \xymatrix@R-.2pc{
    \ol X_n
    \ar[rrrr]^-{\hat e_n}
    \ar[dddd]_{\ol e_n}
    \ar[rd]^-{\ol i_n^*}
    &&&&
    H^{n-1} A
    \ar[lld]_-{a^{n-1}}
    \\
    &
    X
    \ar[r]^-{\sol e_s}
    \ar[d]_e
    &
    A
    \\
    &
    HX + A
    \ar[r]_-{H\sol e_s + A}
    &
    HA + A
    \ar[u]^{[a,A]}
    \\
    &
    HX_0 = HX
    \ar[u]^{i_0=\inl}
    \ar[rr]_-{H\sol e_s}
    &&
    HA
    \ar[lu]^{\inl}
    \ar[luu]_a
    \\
    H\ol X_{n-1}
    \ar[ru]_-{H\ol i_{n-1}^*}
    \ar[rrrr]_-{H\hat e_{n-1}}
    &&&&
    HH^{n-2} A
    \ar@{=}[uuuu]
    \ar[lu]_-{Ha^{n-2}}
  }
}
\end{equation}
Its upper part commutes by~\eqref{diag:comp}, the left-hand part
by~\eqref{diag:bigpb}, the right-hand part commutes by the definition
of $a^{n-1}$ (see Notation~\ref{not:an}(1)), the lower part
commutes by the induction hypothesis, and the remaining two inner parts
trivially commute. That the outside commutes follows
from~\eqref{eq:mor} by an easy induction. Thus, the inner square
commutes when precomposed with $\ol i_n^*$, as desired.

Finally, for the coproduct component $X_\infty$ we consider the
following diagram:
\begin{equation}\label{diag:infcomp}
\vcenter{
  \xymatrix@-.5pc{
    X
    \ar[rrr]^-{\sol e_s}
    \ar[dddd]_e 
    &&&
    A 
    \\
    &
    X_\infty
    \ar[lu]_-{i_\infty}
    \ar[rru]_-s
    \ar[d]_{e_\infty}
    \\
    &
    HX_\infty
    \ar[d]_-{Hi\infty}
    \ar[rd]^-{Hs}
    \\
    &
    HX
    \ar[r]_-{H\sol e_s}
    \ar[ld]_-{\inl}
    & 
    HA
    \ar[ruuu]_-a
    \ar[rd]^-{\inl}
    \\
    HX + A
    \ar[rrr]_-{H\sol e_s + A}
    &&&
    HA + A
    \ar[uuuu]_{[a,A]}
  }
}
\end{equation}
Its upper part and the middle triangle commute by~\eqref{diag:comp}\footnote{Note that $HX$ is now the left-hand coproduct component while in the previous section it was the right-hand one in $Y + HX$.}, its
left-hand part is the right-hand part of~\eqref{diag:inf}, the lower
and right-hand parts trivially commute and the remaining inner part
commutes by~\eqref{diag:coalalg}. Thus, the outside commutes when
precomposed by $i_\infty$ as desired. 

(2)~Suppose that $\sol e$ is any solution of $e$, and let
$s = \sol e \cdot i_\infty: X_\infty \to A$. We will now prove that
$s$ is a coalgebra-to-algebra morphism from
$e_\infty: X_\infty \to HX_\infty$ to $a: HA \to A$ and that
$\sol e = \sol e_s$. To see the former take
Diagram~\eqref{diag:infcomp} and replace $\sol e_s$ by $\sol e$. Now
the outside commutes, and since so do all other inner parts, it
follows that the part exhibiting $s$ as coalgebra-to-algebra morphism
commutes.

To complete the proof we now show by induction on $n$ that 
\[
\sol e \cdot \ol i_n^* = a^{n-1} \cdot \wh e_n: \ol X_n \to A,
\]
cf.~\eqref{diag:comp}. It then follows that $\sol e \cdot \ol i_n =
\sol e_s \cdot \ol i_n$, and together with $\sol e \cdot i_\infty = s
= \sol e_s \cdot i_\infty$ we can conclude that $\sol e = \sol e_s$.

For the base case $n=1$ consider Diagram~\eqref{diag:base} with $\sol
e_s$ replaced by $\sol e$. Then the inner square commutes, and since all
other inner parts commute as explained in part~(1) of our proof, so
does the desired upper part.

Similary, for the induction step with $n > 1$ consider
Diagram~\eqref{diag:step} with $\sol e_s$ replaced by $\sol e$. Then
the inner square commutes, and since all other inner parts commute as
explained in part~(1) of our proof, so does the desired upper part.
This completes the proof.
\end{proof}
\fi% end iffull
\begin{corollary}\label{cor:ciacorec}
  The functor $H$ is a cia functor.
\end{corollary}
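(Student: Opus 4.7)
The plan is to deduce the corollary as an almost immediate consequence of Proposition~\ref{prop:uniq} combined with the corecursiveness of $A$. Recall that, by Definition~\ref{D:cia}, we need to show that for any corecursive algebra $a\colon HA\to A$ and any equation morphism $e\colon X\to HX+A$, there exists a unique solution $\sol e\colon X\to A$ making diagram~\eqref{diag:sol} commute.

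First I would fix a corecursive algebra $a\colon HA\to A$ and an equation morphism $e\colon X\to HX+A$. Applying the construction from the proof of Theorem~\ref{thm:coprod} (as recalled in Notation~\ref{not:an}(2)) yields the decomposition $X \cong \bigl(\coprod_{n\geq 1} \ol X_n\bigr) + X_\infty$ together with the $H$-coalgebra structure $e_\infty\colon X_\infty \to HX_\infty$. Now I invoke the hypothesis that $(A,a)$ is corecursive: by Definition~\ref{dfn:corec}, applied to the coalgebra $(X_\infty, e_\infty)$, there exists a unique coalgebra-to-algebra morphism $s\colon X_\infty\to A$ satisfying~\eqref{diag:coalalg}.

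Existence of a solution is then immediate from Proposition~\ref{prop:uniq}: the morphism $\sol e_s\colon X\to A$ built from this $s$ via Construction~\ref{constr:sol} is a solution of $e$. For uniqueness, suppose $\sol e\colon X\to A$ is any solution of $e$. The second part of Proposition~\ref{prop:uniq} guarantees that $\sol e = \sol e_{s'}$ for some coalgebra-to-algebra morphism $s'\colon X_\infty\to A$. But corecursiveness of $A$ forces $s' = s$, so $\sol e = \sol e_s$. Hence $(A,a)$ is a cia, and since this holds for every corecursive algebra, $H$ is a cia functor.

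There is no real obstacle: all the hard work has already been done in Theorem~\ref{thm:coprod} (which gives the decomposition of $X$ via hyper-extensivity) and in Proposition~\ref{prop:uniq} (which parametrizes solutions of equation morphisms by coalgebra-to-algebra morphisms on the ``infinite part'' $X_\infty$). The corollary is precisely the statement that the parameter space collapses to a single element exactly when $A$ is corecursive.
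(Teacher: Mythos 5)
Your proposal is correct and follows exactly the paper's own argument: the paper likewise deduces the corollary from Proposition~\ref{prop:uniq} by observing that it sets up a correspondence between solutions of $e$ and coalgebra-to-algebra morphisms from $(X_\infty,e_\infty)$ to $(A,a)$, of which there is exactly one by corecursiveness. Your slightly more explicit handling of uniqueness (any solution is $\sol e_{s'}$ and corecursiveness forces $s'=s$) is just an unpacking of the same reasoning.
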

Indeed, if $(A,a)$ is a corecursive $H$-algebra and $e: X \to HX + A$ is a given
equation morphism, we have a unique $s$ as
in~\eqref{diag:coalalg}. Now note that Proposition~\ref{prop:uniq}
establishes a bijective correspondence between solutions of $e$ and
coalgebra-to-algebra morphisms from $e_\infty$ to $a$, and therefore
there exists a unique solution of $e$.

\begin{example}\label{E:U}
  For the ultrafilter functor $U$ of Example \ref{E:ultra} consider
  the subfunctor $U_0$ of all $\omega$-complete ultrafilters, 
  i.e., those closed under countable intersections.  This functor preserves
  countable coproducts and $\nu U_0 = 1$. Assume that a proper class
  of measurable cardinals $n$ exists (i.e., for each $n$ we have an
  $\omega$-complete ultrafilter $P$ on a set $X$ not containing any
  subset of $X$ of less than $n$ elements). This is quite a strong
  assumption in set theory, but we make it here to derive a strong
  property of $U_0$: it is a non-accessible cia functor! Indeed, the
  latter follows from Corollary~\ref{cor:ciacorec}, and $U_0$ is not
  accessible: for every measurable cardinal $n$ it does
  not preserve the $n$-filtered colimit of all subsets $Y$ of $X$ of
  cardinality less than $n$, since $P$ lies in $U_0 X$ but not in
  $U_0 Y$ if $|Y| < n$. This is a surprising example in
  view of Theorem~\ref{T:finitary} which shows that such a complex
  example does not exist among finitary set functors.
\end{example}

Finally, note that both cias and corecursive algebras form full
subcategories of the category of all algebras for $H$. Thus
Corollary~\ref{cor:ciacorec} establishes an isomorphism of
categories between the categories of cias and corecursive algebras for
$H$.

The following proposition needs no assumptions on $H$ or the base
category except that binary coproducts exist.
\begin{proposition}
  If $H$ is a cia functor, then so is $H(-) + Y$ for every object $Y$. 
\end{proposition}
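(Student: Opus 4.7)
The plan is to reduce the cia property for $\bar H := H(-) + Y$ directly to the cia property for $H$. Suppose $(A, \bar a)$ is a corecursive $\bar H$-algebra; using the decomposition $\bar H A = HA + Y$ I write $\bar a = [a, y]$ for component morphisms $a: HA \to A$ and $y: Y \to A$. The goal is to show that $(A, \bar a)$ is a cia for $\bar H$.

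My first step would be to establish that $(A, a)$ is itself a corecursive $H$-algebra. This is a routine transfer: for any $H$-coalgebra $e: X \to HX$, the composite $\inl \cdot e: X \to HX + Y$ is a $\bar H$-coalgebra, and since $\bar a \cdot \inl = a$, a morphism $s: X \to A$ is a coalgebra-to-algebra morphism from $e$ into $(A, a)$ if and only if it is one from $\inl \cdot e$ into $(A, \bar a)$. Corecursiveness of $(A, \bar a)$ therefore transfers to $(A, a)$, and by the hypothesis that $H$ is a cia functor, $(A, a)$ is an $H$-cia.

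For the main step I would, given a flat $\bar H$-equation morphism $e: X \to \bar H X + A = HX + Y + A$, fold the $Y$-summand into the parameters. Concretely, I would form
\[
  e' = \Bigl(X \xrightarrow{e} HX + Y + A \xrightarrow{HX + [y, A]} HX + A\Bigr)
\]
and invoke the $H$-cia property of $(A, a)$ to produce a unique $s: X \to A$ satisfying $s = [a, A] \cdot (Hs + A) \cdot e'$. The key identity to verify is
\[
  [a, A] \cdot (Hs + A) \cdot (HX + [y, A]) = [a, y, A] \cdot (Hs + Y + A),
\]
which is immediate by evaluating both sides on the three coproduct components $HX$, $Y$, and $A$ of $HX + Y + A$. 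This identity shows that $s$ satisfies the $\bar H$-cia equation $s = [\bar a, A] \cdot (\bar H s + A) \cdot e$, and conversely that any $\bar H$-solution of $e$ is an $H$-solution of $e'$; uniqueness of the $\bar H$-solution then follows from uniqueness in the $H$-cia property.

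I do not expect a serious obstacle: the proof amounts to a bookkeeping identification of $\bar H$-flat equations with $H$-flat equations with enlarged parameters, mediated by the copairing $[y, \id_A]$. The only mild care needed is in matching up the two solution equations, but once the copairing is introduced this is a direct coproduct-component calculation. Consistent with the statement, the argument uses nothing beyond the existence of binary coproducts (for $Y + A$ and the associated decompositions).
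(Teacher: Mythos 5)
Your proof is correct and follows essentially the same route as the paper: first transfer corecursiveness from $(A,[a,y])$ to $(A,a)$ via the coalgebra $\inl\cdot e$, then use the hypothesis to get that $(A,a)$ is an $H$-cia and upgrade to a cia structure for $H(-)+Y$. The only difference is that for the second step the paper simply cites the known equivalence between cias $(A,a)$ for $H$ equipped with a morphism $y\colon Y\to A$ and cias $(A,[a,y])$ for $H(-)+Y$, whereas you prove that equivalence inline by folding the $Y$-summand into the parameters via $[y,\id_A]$ and checking the solution equations agree componentwise; this is exactly the content of the cited result, so your argument is a self-contained version of the same proof.
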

\iffull% Proof only in full version
\begin{proof}
  Let $[a,y]: HA + Y \to A$ be a corecursive algebra for $H(-) + Y$.

  (1)~The algebra $a: HA \to A$ is corecursive for $H$. Indeed, for
  every coalgebra $e: X\to HX$ we form the following coalgebra for $H(-)+Y$:
  \[
    f = (X \xrightarrow{e} HX \xrightarrow{\inl} HX + A).
  \]
  Now consider the diagram below:
  \[
    \xymatrix@C+1pc{
      X
      \ar[d]_e
      \ar[r]^-s
      & 
      A
      \\
      HX
      \ar[r]^-{Hs} 
      \ar[d]_\inl
      & 
      HA
      \ar[u]_a
      \ar[d]^{\inl}
      \\
      HX + Y 
      \ar[r]_-{Hs + Y}
      \ar@{<-} `l[u] `[uu]^f [uu]
      &
      HA + Y
      \ar `r[u] `[uu]_{[a,y]} [uu]
      }
  \]
  This shows that there is a bijective correspondence between
  coalgebra-to-algebra morphisms from $e$ to $a$ (w.r.t.~$H$) and those
  from $f$ to $[a,y]$ (w.r.t.~$H(-) + Y$). Since the former exists
  uniquely, so does the latter, hence $A$ is corecursive for
  $H$. 

  (2)~From~(1) we have by assumption that $(A,a)$ is a cia for $H$. It
  follows that $(A, [a,y])$ is a cia for $H(-) + Y$ because to give
  a cia $(A, a)$ for $H$ and a morphism $y: Y \to A$ is equivalent to
  giving a cia $(A, [a,y])$ for $H(-)+Y$, see the proof
  of~\cite[Theorem~2.10]{m_cia}.
\end{proof}
\fi% end iffull
\begin{corollary}\label{C:strong}
  Let $H$ be a functor having a terminal coalgebra and preserving
  countable coproducts. Then $H(-)+Y$ is a cia functor for every object
  $Y$.
\end{corollary}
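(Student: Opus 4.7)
The plan is to combine the two immediately preceding results: Corollary~\ref{cor:ciacorec} and the Proposition just above. Under the hypotheses of the corollary, $H$ satisfies Assumption~\ref{ass:3} (recall we have also fixed a hyper-extensive base category throughout Section~\ref{sec:coprod}), so Corollary~\ref{cor:ciacorec} applies directly and tells us that $H$ itself is a cia functor.

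Once this is in hand, I would invoke the preceding Proposition, which has the weaker hypothesis that binary coproducts exist in the base category (automatic here, since a hyper-extensive category has countable coproducts) and requires only that the input functor be a cia functor. Applying this Proposition to $H$ yields that $H(-)+Y$ is a cia functor for every object $Y$, which is exactly the statement of Corollary~\ref{C:strong}.

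There is essentially no obstacle: the corollary is a one-line consequence. The only thing worth double-checking is that the hypotheses line up correctly, namely that the hyper-extensivity of the base category carries over (it does, since we fixed this setting from Assumption~\ref{ass:3} onwards) and that $H$ still has a terminal coalgebra and preserves countable coproducts (given). Thus no further construction or diagram chase beyond what has already been done for Corollary~\ref{cor:ciacorec} and the preceding Proposition is needed.
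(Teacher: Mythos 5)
Your proposal is correct and matches the paper's intended argument exactly: the corollary is stated immediately after the Proposition precisely so that it follows by combining Corollary~\ref{cor:ciacorec} (which gives that $H$ is a cia functor under Assumption~\ref{ass:3}) with the Proposition (which upgrades any cia functor $H$ to the cia functors $H(-)+Y$). Your check that the hypotheses line up is also the right thing to verify, and nothing further is needed.
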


\iffull% Section 5 only in full version
\section{Elgot Algebras and Bloom Algebras}

Throughout this section $H$ denotes an endofunctor on a
hyper-extensive category preserving countable coproducts and having a
terminal coalgebra $\nu H$. We know that $H$ is then
\emph{iteratable}, i.e., for every $Y$ the terminal coalgebra $TY$ for
$H(-)+Y$ exists, viz.
\[
TY = \coprod_{n < \omega} H^n Y + \nu H.
\] 
This is the free cia on $Y$. According to
Corollary~\ref{cor:ciacorec}, $TY$ is also the free corecurive algebra
on $Y$.

The assignment of a free cia $TY$ to the given object $Y$ is
well-known to yield a monad $\T$; in fact, this monad is the
\emph{free completely iterative monad} on
$H$, see~\cite{aamv,m_cia}. We will not recall the notion of a
completely iterative monad here, as it is not needed in the present
paper. However, note that the unit of the monad $\T$ is given by
$\eta_Y: Y \to TY$ and the multiplication is given by freeness:
$\mu_Y: TTY \to TY$ is the unique algebra morphism extending
$\id_{TY}$ from the free cia $TTY$ on $TY$ to the cia $TY$.

The present section concerns the Eilenberg-Moore algebras for the
monad $\T$. In previous joint work with J.~Velebil~\cite{amv_elgot} we
called them complete Elgot algebras and described them as algebras 
for $H$ equipped with an operation
$\sol{(-)}$ that assigns to every equation morphism $e: X \to HX + A$
a solution $\sol e: X \to A$ satisfying two easy and well-motivated
axioms that we now recall.
\begin{notation}
  Given morphisms $e: X \to HX + Y$ and $h: Y \to Z$ we write
  \[
    h \after e = (X \xrightarrow{e} HX + Y \xrightarrow{HX + h} HX + Z).
  \]
\end{notation}
\begin{definition}
  A \emph{complete Elgot algebra} for $H$ is a triple $(A,a,\dagger)$
  where $a: HA \to A$ is an algebra and $\dagger$ is an operation
  that assigns to every equation morphism $e: X \to HX +A$ a
  solution $\sol e: X \to A$ (i.e., the square~\eqref{diag:sol} commutes) such
  that the following two properties hold:
  \begin{enumerate}[(1)]
  \item \emph{Functoriality}: for every two equation morphisms $e: X
    \to HX + A$ and $f: Y \to HY +A$ and every coalgebra morphism $h: X \to
    Y$ we have that $\sol f \cdot h = \sol e$:
    \[
      \vcenter{
      \xymatrix@R-1.5pc{
        X \ar[r]^-e \ar[dd]_h & HX + A \ar[dd]^{Hh + A} \\ \\
        Y \ar[r]_-f & HY + A
        }}
      \qquad
      \implies
      \qquad
      \vcenter{
        \xymatrix@R-1.5pc{
          X
          \ar[rd]^-{\sol e}
          \ar[dd]_h
          \\
          & A\\
          Y \ar[ru]_-{\sol f}
          }
        }
    \]
  \item \emph{Compositionality}: Given $e: X \to HY + Y$ and $f: Y \to
    HY + A$ we form the following equation morphism
    \[
       e \plus f = (X +Y \xrightarrow{[e,\inr]} 
       HX + Y \xrightarrow{HX + f} HX + HY + A \xrightarrow{\can + A}
       H(X+Y) + A);
    \]
    compositionality states that 
    \[
      \sol{(e \plus f)} \cdot \inl = \sol{(\sol f \after e)}: X \to A. 
    \]
  \end{enumerate}

  A morphism of complete Elgot algebras from $(A, a, \dagger)$ to
  $(B, b, \ddagger)$ is a morhism $h: A \to B$ \emph{preserving
    solutions}, i.e., for every $e: X \to HX + A$ the following triangle
  commutes:
  \[
    \xymatrix{
      &
      X
      \ar[ld]_{\sol e} \ar[rd]^{(h \after e)^\ddagger}
      \\
      A
      \ar[rr]_-h 
      && 
      B
    }
  \]
\end{definition}
Note that every morphism of complete Elgot algebras is an $H$-algebra
morphism from $(A,a)$ to $(B,b)$~\cite[Lemma~5.2]{amv_elgot}.
Further recall from \emph{loc.~cit.}~that every cia for $H$ is a
complete Elgot algebra; in fact, one readily proves that the operation
assigning to a given equation morphism its unique solution satisfies
functoriality and compositionality. Further examples of complete Elgot
algebras are algebras on cpos with continuous algebra structure and
algebras on non-empty complete metric spaces with contracting algebra
structure~\cite{amv_elgot}.

The following result holds for every iteratable endofunctor $H$ on a
category with binary coproducts. 
\begin{theorem}[\cite{amv_elgot}]
  The category of Eilenberg-Moore algebras for $\T$ is isomorphic to
  the category of complete Elgot algebras and their morphisms.
\end{theorem}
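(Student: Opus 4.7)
The plan is to construct two functors between the categories of Eilenberg--Moore algebras for $\T$ and of complete Elgot algebras, and to verify that they are mutually inverse and act correctly on morphisms.

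First I would define the functor from Eilenberg--Moore algebras to complete Elgot algebras. Given an algebra $\alpha : TA \to A$, the $H$-algebra structure is recovered as
\[
a \;=\; \bigl(HA \xrightarrow{H\eta_A} HTA \xrightarrow{\tau_A} TA \xrightarrow{\alpha} A\bigr),
\]
using the component $\tau_A : HTA \to TA$ of the free cia structure. The solution operation $\dagger$ is defined by exploiting that $TA$ is the terminal coalgebra for $H(-)+A$: given any equation morphism $e : X \to HX + A$, the coalgebra $(X,e)$ induces a unique coalgebra morphism $\wt e : X \to TA$, and one sets $\sol e = \alpha \cdot \wt e$. The unit law $\alpha \cdot \eta_A = \id_A$ guarantees that $\sol e$ is a solution with respect to $a$, and functoriality follows directly from uniqueness of the coalgebra morphism $\wt e$. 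Compositionality reduces to an equality between two coalgebra morphisms into $TA$, which follows from naturality and the multiplication law $\alpha \cdot \mu_A = \alpha \cdot T\alpha$ after unfolding $\wt{e\plus f}$ and $\wt{\sol f \after e}$ along the coalgebra structure of $TA$.

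Conversely, I would define the functor from complete Elgot algebras to Eilenberg--Moore algebras by applying the solution operation to the canonical coalgebra itself. Concretely, the structure
\[
\alpha_{TA} : TA \to HTA + A
\]
(the inverse of $[\tau_A,\eta_A]$) is an equation morphism for $H(-)+A$ with variables $TA$, and one defines $\alpha := (\alpha_{TA})^\dagger : TA \to A$. The equation $\alpha \cdot \eta_A = \id_A$ follows from functoriality applied to the equation morphism $\inr\cdot \id_A : A \to HA + A$, whose unique solution must equal $\id_A$. The associativity equation $\alpha \cdot \mu_A = \alpha \cdot T\alpha$ is the main technical step: it expresses precisely a compositionality statement for the two layered equation morphisms obtained by viewing $TTA$ first as an $H(-)+TA$-coalgebra (giving $\mu_A$) and then composing with $\alpha_{TA}$, and it follows by applying the compositionality axiom to the coalgebra structures of $TTA$ and $TA$.

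To finish, I would check that these two assignments are mutually inverse on objects and that morphisms correspond: a morphism of Eilenberg--Moore algebras $h : A \to B$ satisfies $h \cdot \alpha = \beta \cdot Th$, which by the construction of $\dagger$ translates exactly to $h \cdot \sol e = (h\after e)^{\ddagger}$, i.e.\ preservation of solutions; the converse direction follows by evaluating solution-preservation at the canonical equation morphism $\alpha_{TA}$. The hard step throughout is the associativity/compositionality correspondence: this is where the axioms of a complete Elgot algebra must be engineered to match exactly the multiplication of the free completely iterative monad $\T$, and verifying it requires a careful diagram chase that unpacks $\mu_A$ through its definition by freeness as the unique algebra morphism $TTA \to TA$ extending $\id_{TA}$.
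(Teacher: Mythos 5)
This theorem is not proved in the paper at all: it is imported verbatim from \cite{amv_elgot} (Theorem~5.8 there), so there is no in-paper argument to compare against. Your sketch correctly reconstructs the argument of the cited reference: Eilenberg--Moore structures $\alpha: TA\to A$ correspond to Elgot structures via $\sol e = \alpha\cdot\wt e$ (using terminality of $TA$ as a coalgebra for $H(-)+A$) in one direction and via $\alpha=(\alpha_{TA})^\dagger$ in the other, with functoriality coming from terminality, the unit law from functoriality applied to $\inr: A\to HA+A$, and associativity matched against compositionality. One small imprecision: to verify that $\alpha\cdot\wt e$ is a solution for $a=\alpha\cdot\tau_A\cdot H\eta_A$ you need $\alpha\cdot\tau_A=a\cdot H\alpha$, and establishing this uses the associativity law $\alpha\cdot T\alpha=\alpha\cdot\mu_A$ together with the identity $\mu_A\cdot\tau_{TA}=\tau_A\cdot H\mu_A$, not only the unit law as you state; this is a routine repair, not a gap.
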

Of course, in the light of Corollary~\ref{cor:ciacorec}, the monad
$\T$ is also the monad of free corecursive algebras. For an accessible
endofunctor on a locally presentable category we have described the
Eilenberg-Moore algebras for that monad in~\cite{ahm14}. We now recall
the definition.
\begin{definition}
A \emph{Bloom algebra} is a triple $(A,a,\dagger)$ where
$a:HA\rightarrow A$ is an $H$-algebra and $\dagger$ is an operation
assigning to every coalgebra $e:X\rightarrow HX$ a
coalgebra-to-algebra morphism $\sol e:X\rightarrow A$ so that
$\dagger$ is functorial. This means that we obtain a functor
$$\dagger:\mathsf{Coalg}\, H\rightarrow \C/A.$$
More explicitly, given a coalgebra morphism $h$ from $(X,e)$ to $(Y, f)$ we have $\sol f \cdot h = \sol e$:
\[
  \vcenter{
    \xymatrix@R-1.5pc{
      X\ar[r]^-{e}\ar[dd]_{h}
      &
      HX\ar[dd]^{Hh}
      \\ \\
      Y \ar[r]_-{f} 
      & 
      HY
    }
  }
  \qquad
  \implies
  \qquad
  \vcenter{
    \xymatrix@R-1.5pc{
      X
      \ar[rd]^-{\sol e}
      \ar[dd]_h
      \\
      & A\\
      Y \ar[ru]_-{\sol f}
    }
  }
\]
\end{definition}
Bloom algebras form a category together with solution preserving
algebra morphisms (defined
completely analogously as for complete Elgot algebras).
%: a solution preserving algebra
%morphism from the Bloom algebra $(A, a, \dagger)$ to the Bloom algebra
%$(B, b, \ddagger)$ is an algebra morphism $h: A \to B$ such that for
%every $H$-coalgebra $e: X \to HX$ we have $h \cdot \sol e =
%e^\ddagger$. 

We will now prove that under our current assumption Bloom algebras and
complete Elgot algebras are the same concept. Recall that the terminal coalgebra $\nu H$ is considered as an algebra for $H$.
\begin{theorem}\label{thm:elgotbloom}
  Ih $H$ preserves countable coproduts and has a terminal coalgebra, then the following categories are isomorphic:
  \begin{enumerate}[(1)]
  \item the Eilenberg-Moore category $\C^\T$,
  \item the slice category $\nu H/\!\Alg H$
  % (where we consider the
   % terminal coalgebra as the algebra $t^{-1}: H(\nu H) \to \nu H$,
  \item the category of Bloom algebras for $H$, and
  \item the category of complete Elgot algebras for $H$.
  \end{enumerate}
\end{theorem}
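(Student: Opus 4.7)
The plan is to establish the chain $(1)\cong(2)\cong(3)$ and combine it with the already-known isomorphism $(1)\cong(4)$ from~\cite{amv_elgot}. The key inputs are Theorem~\ref{thm:coprod}, which gives the decomposition $TA = H^*A + \nu H$ of the free cia, and Corollary~\ref{cor:ciacorec}, which identifies cias with corecursive algebras so that $\T$ is simultaneously the free cia monad and the free corecursive monad.

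For $(2)\cong(3)$ I would use the universal property of $\nu H$ as terminal coalgebra. Write $c_e: X \to \nu H$ for the unique coalgebra morphism out of $(X,e)$. Given a Bloom algebra $(A,a,\dagger)$, set $m := \sol\ter: \nu H \to A$; this is an $H$-algebra morphism $(\nu H, \ter^{-1}) \to (A,a)$ because it is a coalgebra-to-algebra morphism and $\ter$ is invertible. Conversely, given $(A,a,m)$ in $\nu H/\Alg H$, define $\sol e := m \cdot c_e$; functoriality is immediate from the uniqueness of $c_{(-)}$. These assignments are mutually inverse: functoriality of $\dagger$ applied to the coalgebra morphism $c_e: (X,e) \to (\nu H, \ter)$ yields $\sol e = \sol\ter \cdot c_e = m \cdot c_e$, while $c_\ter = \id_{\nu H}$ gives the other direction. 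Morphism preservation matches, since preserving $\sol\ter$ is the same as commuting with $m$.

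For $(1)\cong(2)$ I would split any EM structure $\alpha: TA \to A$ along $TA = H^*A + \nu H$ as $\alpha = [\alpha_1, m]$ with $\alpha_1: H^*A \to A$ and $m: \nu H \to A$. The unit law forces $\alpha_1 \cdot j_0 = \id_A$. Setting $a := \alpha_1 \cdot j_1: HA \to A$, monad associativity together with freeness of $H^*A$ as an $H$-algebra forces $\alpha_1 \cdot j_n = a^n$ for every $n$ (Notation~\ref{not:an}), so $\alpha_1$ is entirely determined by $a$. Localising $\alpha \cdot \mu_A = \alpha \cdot T\alpha$ on the summand $H\nu H \subset HTA = HH^*A + H\nu H$ inside $H^*TA \subset TTA$ yields $m \cdot \ter^{-1} = a \cdot Hm$, so $m$ is an $H$-algebra morphism. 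Conversely, given $(A,a,m)$ in $\nu H/\Alg H$, define $\alpha := [[a^n]_{n<\omega}, m]$ and verify the monad laws. Morphisms in $\C^\T$ correspond to $H$-algebra morphisms commuting with $m$, by naturality of the inclusion $\inr: \nu H \hookrightarrow T(-)$.

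The main obstacle is the careful diagram chase behind $(1)\cong(2)$, in particular identifying $\mu_A$ restricted to the summand $H\nu H \subset TTA$ as $\inr \cdot \ter^{-1}$ (via the freeness-based characterization of $\mu_A$ as the unique $H$-algebra morphism $H^*TA \to TA$ extending $\id_{TA}$), which is what makes associativity force the $H$-algebra morphism condition on $m$. Once these identifications are in place, the remaining verifications are routine, and the theorem follows by combining the three isomorphisms with the cited $(1)\cong(4)$.
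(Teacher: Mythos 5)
Your proposal is correct, but it takes a genuinely different route from the paper for the substantive part of the argument. Both proofs import $(1)\cong(4)$ from \cite[Theorem~5.8]{amv_elgot}. For the remainder, the paper simply cites $(2)\cong(3)$ from \cite[Proposition~3.4]{ahm14} --- whose content is essentially your terminality argument, $m=\sol{\ter}$ versus $\sol e = m\cdot c_e$ for $c_e$ the unique coalgebra morphism $(X,e)\to(\nu H,\ter)$ --- and then closes the triangle by proving $(1)\cong(3)$: since $H$ preserves countable coproducts, $TY=H^*Y+\nu H$ is a coproduct in $\Alg H$, so $TY$ is the free Bloom algebra on $Y$ by \cite[Theorem~3.16]{ahm14}, and the forgetful functor from Bloom algebras is then shown to be monadic via Beck's theorem (creation of coequalizers of split pairs, using \cite[Lemma~3.7]{ahm14}). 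You instead attack $(1)\cong(2)$ head-on, splitting an Eilenberg--Moore structure $\alpha\colon TA\to A$ along $TA=H^*A+\nu H$ as $[\alpha_1,m]$ and showing that the unit and associativity laws force $\alpha_1=[a^n]_{n<\omega}$ and force $m$ to be an algebra morphism; your key identification of $\mu_A$ on the summand $H(\nu H)$ as $\inr\cdot\ter^{-1}$ is correct, since the algebra structure of $TA$ is $\phi_A+\ter^{-1}$ by Theorem~\ref{thm:coprod} and $\mu_A\cdot\inl\cdot j_1=\tau_A$. What your route buys is self-containedness: it avoids the free-Bloom-algebra theorem and Beck's monadicity criterion entirely, at the price of a longer (but routine, and valid) verification of the monad laws for $\alpha=[[a^n]_{n<\omega},m]$ on all summands $H^nH^kA$ and $H^n(\nu H)$ of $TTA$, which uses preservation of countable coproducts by $H$ once more. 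The paper's route buys maximal reuse of \cite{ahm14}; its only real work is checking that the results invoked there do not depend on accessibility of $H$ or local presentability of the base category.
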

\begin{proof}
  The isomorphism $(1) \cong (4)$ was proved in
  \cite[Theorem~5.8]{amv_elgot} for every iteratable endofunctor $H$. 

  The rest follows from various results in~\cite{ahm14}. In that paper we assumed that 
  $H$ is accessible and $\C$ is locally presentable. However, for our purposes we
  only apply those result of \emph{loc.~cit.}~ that do not depend on those assumptions, as we now explain. 
  First, the isomorphism $(2) \cong (3)$ was proved in \cite[Proposition~3.4]{ahm14} for
  every endofunctor $H$ having a terminal coalgebra $\nu H$. 

  The other results of \emph{loc.~cit.}~make use of
  coproducts in $\Alg H$. But since $H$ preserves countable
  coproducts, we know that the forgetful functor from $\Alg\, H$ to
  $\C$ creates countable coproducts. Hence, for example
  $TY = \coprod_{n < \omega} H^n Y + \nu H$ is a coproduct in
  $\Alg H$ of the free algebra $H^*Y = \coprod_{n< \omega} H^n Y$ on
  $Y$ and the algebra $(\nu H,
  t^{-1})$. By~\cite[Theorem~3.16]{ahm14}, $TY$ is then a free Bloom
  algebra on $Y$. That is, the forgetful functor $U_B$ of
  the category of Bloom algebra has the left adjoint $T(-)$.
  It is now easy to prove that $U_B$ is monadic,
  i.e., the isomorphism $(1) \cong (3)$ holds. The argument is given in
  the proof of~\cite[Theorem 4.15]{ahm14}; we repeat it here for the
  convenience of the reader (and to make clear that no extra
  assumptions are needed).

  Before we proceed let us recall~\cite[Lemma~3.7]{ahm14}: if
  $(A,a,\dagger)$ is a Bloom algebra and $h:(A,a)\rightarrow (B,b)$ is
  an algebra morphism, then there is a unique structure of a Bloom
  algebra on $(B,b)$ such that $h$ is a solution
  preserving algebra morphism. %We call it, \emph{the Bloom algebra induced by} $h$.

  We now prove that $U_B$ is monadic. By
  Beck's Theorem~\cite[4.4.4]{maclane}, it suffices to prove that
  $U_B$ creates coequalizers of $U_B$-split pairs. That means that
  given a parallel pair of solution preserving algebra morphisms
  \[
    f,g:(A,a,\dagger)\to (B,b,\ddagger)
  \]
  and given morphisms in $\C$ as follows
  \[
    \begin{array}{lp{6cm}}
      k: B\to C & \text{with $k\cdot f=k\cdot g$}, \\
      s:C\rightarrow B & \text{with $k\cdot s=id_C$, and} \\
      t:B\rightarrow A & \text{with $s\cdot k=f\cdot t$ and
                         $id_B=g\cdot t$,}
    \end{array}
  \]
  there exists a unique structure $(C,c,*)$ of a Bloom algebra such
  that $k$ is a solution preserving algebra morphism; moreover, $k$ is
  then a coequalizer in the category of Bloom algebras for $H$. Indeed, firstly,
  $C$ carries a unique
  structure of an $H$-algebra such that $k$ is an algebra morphism, namely:
  \[
    c = (HC \xrightarrow{Hs} HB \xrightarrow{b} B \xrightarrow{k} C)
  \]

  Secondly, by the above lemma there exists a unique structure $(C,c,*)$ of
  a Bloom algebra for which $k$ is a solution preserving algebra
  morphism. It only remains to verify that $k$ is a coequalizer in the
  category of Bloom algebras for $H$. To this end, let
  $h:(B,b,\ddagger)\to (D,d,+)$ be a solution preserving algebra
  morphism with $h\cdot f=h\cdot g$. There exists a unique algebra
  morphism $h':(C,c)\to (D,d)$ with $h=h'\cdot k$. In order to see
  that $h'$ preserves solutions (i.e., for every $e:X\rightarrow HX$
  we have $h'\cdot e^*=e^+$) we use that both $k$ and $h$ preserve
  solutions, and we calculate as follows:
  \[
    h'\cdot e^*  
    = 
    h'\cdot k\cdot e^\ddagger 
    =
    h\cdot e^\ddagger=e^+.\qedhere
  \]
\end{proof}
\fi% Section 5 only in full version

%
% Here comes Section 6
%
\section{Finitary $\Set$ Functors}

We have seen above that for every functor $H$ on a hyper-extensive
category preserving countable coproducts, the functors $H(-)+Y$ are
cia functors (i.e., every corecursive algebra is a cia). In
particular, if $\C$ is cartesian closed, then the functor
$X \mapsto W \times X + Y$ is a cia functor. For $\C = \Set$ and $H$
finitary we now prove the converse: if $H$ is a cia functor then it
has the form $X \mapsto W \times X + Y$ for some sets $W$ and $Y$.
\begin{assumption} \label{ass}
  Throughout this section $H$ denotes a standard, finitary set
  functor. 
\end{assumption}
Recall from~\cite{ap04} that $H$ is \emph{finitary} iff for every set
$X$ we have $HX = \bigcup HY$ where the union ranges over finite
subsets $Y \subseteq X$. An example of a finitary functor on $\Set$ is
the polynomial functor $H_\Sigma$, see Example~\ref{ex:corec}(3).

\emph{Standard} means that $H$ preserves 
\begin{enumerate}[(1)]
\item inclusions, i.e., $X \subseteq Y$ implies $HX \subseteq HY$ and
  the $H$-image of the inclusion map $X \subto Y$ is the inclusion
  map $HX \subto HY$, and
\item finite intersections.
\end{enumerate} 

Assuming that $H$ is standard is without loss of generality because
for every set functor $H$ there exist a standard set functor $H'$
naturally isomorphic to $H$ on the full subcategory of all nonempty
sets~\cite[Theorem~3.4.5]{at90}. (And the change of value at $\emptyset$
is irrelevant for us since corecursive algebras and cias,
respectively, for $H$ are in bijective correspondence with those for
$H'$).

%sending $X$ to $\coprod \Sigma_n times X^n$.

%The elements of $H_\Sigma X$ corresponding to a summand $X^n$ for a
%given $n$-ary operation symbol $\sigma$ are denoted by
%$\sigma(x_1, \dotsc, x_n)$.

\begin{definition}
\begin{enumerate}[(1)]
\item By a \emph {presentation} of $H$ is meant a finitary signature $\Sigma$ and natural
  epitransformation $\eps: H_\Sigma \rightarrow H$, i.e., every
  component $\eps_X$ is a surjective map.
\item An \emph {$\eps$-equation} is an expression
  $\sigma(x_1,\ldots x_n)=\tau(z_1, \ldots, z_m)$ where $\sigma$ is an
  $n$-ary operation symbol and $\tau$ an $m$-ary one such that
  $\eps_X$ merges the two elements of $H_\Sigma X$ where 
  $X=\{x_1, \ldots, x_n, z_1, \ldots, z_n\}$.
\end{enumerate}
\end{definition}
\begin{rem}
  \label{rem:eps}
  All $\eps$-equations form an equivalence relation. More precisely,
  for any set $X$ all $\eps$-equations with variables replaced by
  elements of $X$ form precisely the kernel equivalence of
  $\eps_X$. Moreover, the elements of $HX$ may be regarded as
  equivalence classes of the elements $\sigma(x_1, \ldots, x_n)$ of
  $H_\Sigma X$ modulo this equivalence.
\end{rem}
\begin{example}
  The finite power-set functor $\powf$ has a presentation with $\Sigma$
  having a single $n$-ary operation for every $n$, and $\eps$
  sending $\sigma (x_1, \dotsc ,x_n)$ to $ \{x_1, \dotsc ,x_n\}$.
\end{example}
The following lemma was proved in~\cite{at90}. We present a (short)
proof since we refer to it later.
\begin{lemma}\label{L:pres}
  Every finitary set functor has a presentation
  $\eps: H_\Sigma \to H$, and the category $\Alg H$ is isomorphic to
  the variety of all $\Sigma$-algebras satisfying all
  $\eps$-equations.
\end{lemma}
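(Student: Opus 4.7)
My plan has two parts: construct the presentation, then establish the isomorphism of categories.

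For the presentation, I take $\Sigma_n = Hn$ (where $n$ denotes the set $\{1,\ldots,n\}$), so every element $\sigma \in Hn$ is regarded as an $n$-ary operation symbol. The natural transformation $\eps: H_\Sigma \to H$ is defined componentwise by
\[
  \eps_X(\sigma,(x_1,\ldots,x_n)) = H\hat x(\sigma), \qquad \hat x: n \to X,\ i \mapsto x_i.
\]
Naturality follows from functoriality of $H$ together with the identity $f \cdot \hat x = \widehat{f(x)}$ for any $f: X \to Y$. For surjectivity of $\eps_X$ on an arbitrary set $X$, I take $t \in HX$ and use finitariness of $H$ to obtain a finite subset $Y \subseteq X$ with $t \in HY$, where standardness makes the inclusion $HY \subto HX$ literal. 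Fixing a bijection $\hat y : n \to Y$ and putting $\sigma = (H\hat y)^{-1}(t) \in Hn = \Sigma_n$, I obtain $\eps_X(\sigma,(y_1,\ldots,y_n)) = t$.

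For the categorical statement, I exhibit mutually inverse assignments between $\Alg H$ and the variety of $\Sigma$-algebras satisfying all $\eps$-equations. An $H$-algebra $(A, a)$ gives a $\Sigma$-algebra via the composite $a \cdot \eps_A : H_\Sigma A \to A$; this $\Sigma$-algebra satisfies every $\eps$-equation because, given an $\eps$-equation $\sigma(x_1,\ldots,x_n) = \tau(z_1,\ldots,z_m)$ on variable set $X$ and any valuation $v: X \to A$, naturality of $\eps$ gives
\[
  \sigma^A(v(x_1),\ldots,v(x_n)) = a \cdot Hv \cdot \eps_X(\sigma,(x_1,\ldots,x_n)),
\]
and the analogous identity for $\tau$; since $\eps_X$ merges the two underlying elements of $H_\Sigma X$, the two right-hand sides agree. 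Conversely, a $\Sigma$-algebra satisfying all $\eps$-equations determines a map $\alpha: H_\Sigma A \to A$; the satisfaction hypothesis combined with Remark~\ref{rem:eps} (which identifies the kernel equivalence of $\eps_A$ with the $\eps$-equations instantiated over $A$) places the kernel of $\eps_A$ inside the kernel of $\alpha$, so $\alpha$ factors uniquely as $a \cdot \eps_A$ through the surjection $\eps_A$. The two assignments are strictly inverse on objects, and morphisms transfer in both directions: naturality of $\eps$ turns an $H$-algebra morphism into a $\Sigma$-algebra morphism, and conversely the epimorphicity of $\eps_A$ lets one cancel $\eps_A$ to recover the $H$-algebra square from the $\Sigma$-algebra one.

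The main obstacle I anticipate is the pointwise surjectivity of $\eps_X$ for arbitrary (infinite) $X$: although $\eps_n$ is tautologically surjective by the definition $\Sigma_n = Hn$, extending this to all sets is where the hypotheses on $H$ do their essential work — finitariness supplies the finite witness $Y$, and standardness ensures that $HY \subto HX$ is an honest subset inclusion rather than an abstract injection, so the witness $\sigma \in Hn$ really does hit $t$ under $\eps_X$. Once surjectivity is in place, the kernel-factorization argument in the second part is a routine diagonal fill-in, with Remark~\ref{rem:eps} supplying exactly the needed characterization of the kernel equivalence of $\eps_A$.
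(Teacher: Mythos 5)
Your proposal is correct and follows essentially the same route as the paper: the signature $\Sigma_n = Hn$, the Yoneda-style definition of $\eps$, surjectivity from finitariness (with standardness making the inclusion $HY \subto HX$ literal), and the mutually inverse assignments via $a \mapsto a\cdot\eps_A$ and the kernel factorization through the surjection $\eps_A$ using Remark~\ref{rem:eps}. You merely spell out a few steps (naturality, the explicit surjectivity witness, the transfer of morphisms) that the paper leaves implicit.
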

\begin{proof}
  Define a signature $\Sigma =(\Sigma_n )_{n \textless \omega}$ by
  $\Sigma_n= Hn$ where we regard $n$ as the finite ordinal
  $\{0, \dotsc ,n-1 \}$ for all $n$.
  By the Yoneda lemma we have a natural transformation
  $\eps_X : H_\Sigma X \rightarrow HX$ assigning to every
  $\sigma (x_1, \dotsc, x_n)$ represented as a function
  $x: n \to X$ the element $Hx(\sigma)$. Since
  $H$ is finitary, $\eps_X$ is surjective.

  Every $H$-algebra $a: HA \rightarrow A$ defines the corresponding
  $\Sigma$-algebra $a\cdot \eps_A : H_\Sigma A \rightarrow A$ which
  clearly satisfies all $\eps$-equations. This defines a full
  embedding of $\Alg H$ into $\Alg H_\Sigma$ (which is identity on
  morphisms). We now easily prove that every $\Sigma$-algebra
  satisfying all $\eps$-equations has the above form
  $(A, a\cdot\eps_A)$. Indeed, given $a^\Sigma: H_\Sigma A \to A$
  satisfying all $\eps$-equations, define $a: HA \to A$ by
  $a([\sigma(a_1, \ldots, a_n]) = a^\Sigma(\sigma(a_1, \ldots,
  a_n))$. Since we know from Remark~\ref{rem:eps} that $a^\Sigma$
  merges all pairs in the kernel of $\eps_A$, this is well-defined and
  we clearly have $a^\Sigma = a \cdot \eps_A$. Thus, our full embedding
  defines the desired isomorphism between $H$-algebras and
  $\Sigma$-algebras satisfing all $\eps$-equations.
\end{proof}

\begin{rem}\label{R:red}
\begin{enumerate}[(1)]
\item Denote by $C_1$ the constant functor with value $1=\{ c\}$, and
  by $C_{0.1}$ its subfunctor with $C_{0,1} \emptyset = \emptyset$ and
  $C_{0,1}X=1$ else. For every natural transformation
  $\alpha: C_{0,1} \rightarrow H$ there exists a unique extension to
  $\alpha': C_1 \rightarrow H$.

  Indeed, since $H$ is standard, it preserves the (empty) intersection
  of the coproduct injections $\inl,\inr: 1 \rightarrow 1+1$.  Since
  $H\inl(\alpha_1(c) )= \alpha_{1+1}(c)= H\inr(\alpha_1(c))$, there
  exists a unique element $t$ of $H \emptyset$ such that the inclusion
  map $v: \emptyset \rightarrow 1$ fulfils $\alpha_1(c)=Hv(t)$. We put
  $\alpha'_{\emptyset}(c)=t$.

\item All constants in our presentation of $H$ are
  \emph{explicit}. That means that whenever some $n$-ary symbol
  $\sigma$ has the property that some $\eps$-equation has the form
  $\sigma (x_1, \dotsc ,x_n)= \sigma (z_1, \dotsc ,z_n)$, where the
  variables $x_i$ are pairwise distinct and none of them equals some
  $z_j$, then there exists a constant symbol $\tau$ in $\Sigma$ for
  which we have the following $\eps$-equation:
  $\sigma (x_1, \dotsc ,x_n)= \tau$. Indeed, for every set
  $X \neq \emptyset$ we have an element
  \[
  \alpha _X= \eps _X(\sigma (a_1, \ldots ,a_n))  \in HX
  \]
  independent of the choice of $a_1, \ldots ,a_n$ in $X$. This defines
  a natural transformation $\alpha: C_{0,1} \to H$. Let
  $\alpha': C_1 \to H$ be its extension according to item~(1). The
  element $\alpha'_\emptyset (c)$ of $H\emptyset$ has, since $\eps$ is
  an epitransformation, the form $\eps _\emptyset (\tau)$ for some
  nullary symbol $\tau$. Then the desired $\eps$-equation holds
  because for $X = \{x_1, \ldots, x_n\}$ and the unique empty map
  $u: \emptyset \to X$ we have
  \[
    \eps_X(\sigma(x_1, \ldots, x_n)) = \alpha_X(c) = \alpha'_X(c) = Hu
    \cdot \alpha'_\emptyset(c) = Hu \cdot \eps_\emptyset(\tau) =
    \eps_X \cdot Hu(\tau) = \eps_X(\tau).
  \]
\end{enumerate}	
\end{rem}
\begin{definition}\label{D:red}
  A presentation $\eps: H_\Sigma \to H$ is \emph{reduced} provided
  that for every $\eps$-equation
  \[
    \sigma(x_1, \ldots, x_n) = \tau(z_1,\ldots, z_m)
  \]
  the following hold:
  \begin{enumerate}[(1)]
  \item if $x_1, \ldots, x_n$ are pairwise distinct, then they all lie in $\{z_1, \ldots, z_n\}$, and
  \item if, moreover, $z_1, \ldots, z_n$ are also pairwise distinct, then $\sigma =\tau$.
  \end{enumerate}
\end{definition}
\begin{proposition}\label{P:red}
  Every finitary set functor has a reduced presentation.
\end{proposition}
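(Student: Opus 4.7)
The plan is to refine the canonical presentation $\eps: H_\Sigma \to H$ of Lemma~\ref{L:pres} (with $\Sigma_n = Hn$ and $\eps_X(\sigma(x_1,\ldots,x_n)) = Hx(\sigma)$) by shrinking $\Sigma$ in two successive stages. First, since $H$ is standard, the family $\{S \subseteq n : \sigma \in H(S)\}$ is closed under finite intersection for each $\sigma \in Hn$, so it has a least element, which we denote $\mathrm{supp}(\sigma)$; let $\widetilde\Sigma_n = \{\sigma \in Hn : \mathrm{supp}(\sigma) = n\}$ be the \emph{essentially $n$-ary} operations. Second, the symmetric group $S_n$ acts on $\widetilde\Sigma_n$ via $\pi \cdot \sigma = H\pi(\sigma)$; pick a transversal $\Sigma'_n \subseteq \widetilde\Sigma_n$ containing exactly one representative from each orbit. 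Define $\eps': H_{\Sigma'} \to H$ as the restriction of $\eps$ along the inclusion $H_{\Sigma'} \subseteq H_\Sigma$.

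To show $\eps'_X$ is surjective, given $t \in HX$, finitariness combined with intersection-preservation yields a least finite $S \subseteq X$ with $t \in H(S)$. Any bijection $y: n \to S$ (with $n = |S|$) provides a unique $\tau \in Hn$ with $t = Hy(\tau)$, and minimality of $S$ forces $\mathrm{supp}(\tau) = n$; replacing $\tau$ by the chosen representative $\sigma = H\pi(\tau) \in \Sigma'_n$ of its orbit rewrites $t$ in the form $\eps'_X(\sigma(\ldots))$. For condition~(1), suppose $\sigma(x_1,\ldots,x_n) = \tau(z_1,\ldots,z_m)$ is an $\eps'$-equation with the $x_i$ pairwise distinct, let $x: n \to X$ and $z: m \to X$ be the corresponding maps, and note that the common element $Hx(\sigma) = Hz(\tau)$ lies in $H(\mathrm{im}\, x) \cap H(\mathrm{im}\, z) = H(\mathrm{im}\, x \cap \mathrm{im}\, z)$. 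Transferring this back to $Hn$ along the bijective corestriction $x^*: n \to \mathrm{im}\, x$ (using that $H$ sends inclusions to inclusions and bijections to bijections) places $\sigma$ in $H((x^*)^{-1}(\mathrm{im}\, x \cap \mathrm{im}\, z))$; since $\mathrm{supp}(\sigma) = n$, this subset equals $n$, giving $\mathrm{im}\, x \subseteq \mathrm{im}\, z$ as required. For~(2), the symmetric argument gives $\mathrm{im}\, z \subseteq \mathrm{im}\, x$, so $\mathrm{im}\, x = \mathrm{im}\, z$ and hence $n = m$; writing $z = x \circ \pi$ for the unique $\pi \in S_n$, injectivity of $Hx$ applied to $Hx(\sigma) = Hx(H\pi(\tau))$ forces $\sigma = H\pi(\tau)$, so $\sigma$ and $\tau$ lie in the same orbit and, being both chosen representatives, must coincide.

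The main obstacle is Stage~2: Stage~1 alone does not suffice because any pair $\sigma,\, \pi \cdot \sigma$ of distinct symmetry-related essentially $n$-ary operations would yield the bad $\eps$-equation
\[
  \sigma(x_1,\ldots,x_n) = (H\pi(\sigma))(x_{\pi^{-1}(1)},\ldots,x_{\pi^{-1}(n)})
\]
violating~(2). Standardness of $H$ is used at every step: intersection-preservation to define $\mathrm{supp}$, inclusion-preservation to view each $H(S)$ as a genuine subset of $Hn$, and the combination to factor $Hx$ as the bijection $Hx^*$ followed by the inclusion $H(\mathrm{im}\, x) \subto HX$, which is what drives the transfer argument in~(1).
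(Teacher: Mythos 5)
Your proof is correct, and while it shares the paper's two-stage skeleton (first strip dummy arguments, then identify permutation-related symbols), the execution is genuinely different. The paper starts from an arbitrary presentation and works syntactically: a coordinate is declared inessential when a particular $\eps$-equation $\sigma(x_1,\dots,x_n)=\sigma(x_1,\dots,z,\dots,x_n)$ exists, arities are shrunk accordingly (with Remark~\ref{R:red}(2) supplying a constant for symbols all of whose coordinates are inessential), and permutation-equivalent symbols are then merged via an equivalence relation read off from $\eps$-equations. You instead fix the canonical Yoneda presentation $\Sigma_n=Hn$ and use semantic invariants: the minimal support $\mathrm{supp}(\sigma)\subseteq n$, obtained from finitarity plus preservation of finite intersections, and the $S_n$-action on $Hn$, taking an orbit transversal of the full-support elements. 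What your route buys is an explicit verification of conditions~(1) and~(2) of Definition~\ref{D:red} --- the transfer of $Hx(\sigma)=Hz(\tau)$ through $H(\mathrm{im}\,x)\cap H(\mathrm{im}\,z)=H(\mathrm{im}\,x\cap\mathrm{im}\,z)$ and the injectivity of $Hx$ for injective $x$ --- which the paper dispatches with ``clearly satisfies both conditions''; the cost is that you lean harder on standardness and only produce one reduced presentation rather than reducing a given one. Two small points worth making explicit: forming $\mathrm{supp}(\sigma)$ may require intersecting down to $\emptyset$, so you are using preservation of the \emph{empty} intersection as well (this is available, cf.\ Remark~\ref{R:red}(1)); and the $S_n$-action does preserve the full-support condition since $\mathrm{supp}(H\pi(\sigma))=\pi(\mathrm{supp}(\sigma))$, so the transversal is well defined inside $\widetilde\Sigma_n$.
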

\iffull% proof only in full version
\begin{proof}
  (a)~Assume that the above condition (1) holds. Then we can restrict
  $\eps$ so that also (2) becomes true.  Indeed, denote by $\sim$ the
  following equivalence on $\Sigma$: $\sigma \sim \tau$ iff there
  exists an $\eps$-equation
  $\sigma(x_1, \ldots, x_n) = \tau(z_1,\ldots, z_m)$ with pairwise
  distinct variables on both sides. Condition (1) implies that $n=m$
  and there exists a permutation $(i_1, \ldots,i_n)$ with
  $x_1=z_{i_1}, \ldots, x_n=z_{i_n}$. This implies that the image of the
  summand $\{\sigma\} \times X^n$ under $\eps_X$ is equal to the image
  of $\{\tau \} \times X^m$. Consequently, if $\Sigma'$ is a choice
  class of $\sim$, then the restriction $\eps'$ of $\eps$ to
  $H_\Sigma'$, as a subfunctor of $H_\Sigma$, is still an
  epi-transformation. And the presentation $\eps'$ fulfils (1) and
  (2) in Definition~\ref{D:red}.

  (b)~It remains to prove that every presentation $\eps$ can be modified
  to one satisfying (1) in Definition~\ref{D:red}.  Let $\sigma$ be an
  $n$-ary symbol of $\Sigma$. For $i=1, \ldots, n$ we say that the
  coordinate $i$ is \emph{inessential} for $\sigma$ if we have an
  $\eps$-equation of the following form:
  \[
    \sigma (x_1, \ldots ,x_n) = \sigma (x_1, \ldots , x_{i-1}, z, x_{i+1}, \ldots , x_n)
  \]
  all of whose $n+1$ variables are pairwise distinct. The remaining
  coordinates will be called \emph{essential}.  Without loss of
  generality we can assume that the essential coordinates are
  precisely $1, \ldots, n'$ for some $n' \leq n$. From Remark \ref{R:red}(3)
  it follows easily that the following is also an $\eps$-equation:
  \[
    \sigma (x_1, \ldots ,x_n) = \sigma (x_1, \ldots , x_{n'}, z, \ldots , z).
  \]
  Form the signature $\Sigma'$ with the same symbols as $\Sigma$ but
  with arities $n'$ in lieu of $n$.  We define a presentation
  $\eps': H_{\Sigma'} \to H$ as follows: for each nonempty set $X$ it
  sends every element $\sigma(x_1, \ldots ,x_{n'})$ to
  $\eps_X (\sigma (x_1, \ldots ,x_{n'},z, \ldots, z))$, where $z$ is
  arbitrary. And to define $\eps_{\emptyset}$, use
  Remark~\ref{R:red}(2): whenever a symbol $\sigma$ has no essential
  coordinate (and hence $\sigma$ becomes a constant symbol in
  $\Sigma'$), there exists a constant symbol $\tau$ in $\Sigma$ and an
  $\eps$-equation $\sigma(x_1, \ldots, x_n) = \tau$. Define
  $\eps'_{\emptyset}(\sigma) = \eps_{\emptyset}(\tau)$. This
  presentation $\eps'$ clearly satisfies both conditions of
  Definition~\ref{D:red}.
\end{proof}
\fi% end iffull
\begin{notation}\label{N:not}
  From now on we assume that a reduced presentation of $H$ is given. 
  
  Recall the notation $TY$, $FY$ and $CY$ from Examples~\ref{ex:cia}
  and Notation~\ref{not:FC}. All these objects exist since $H$ is
  finitary (and therefore so are all $H(-) + Y$). The corresponding
  notation for $H_\Sigma$ is $T_\Sigma Y$, $F_\Sigma Y$ and
  $C_\Sigma Y$. The monad units of $T$ and $C$ are denoted by $\eta $
  and $\eta^C$, respectively.
  
  As mentioned above, $T_\Sigma Y$ can be described as the algebra of
  all $\Sigma$-trees over $Y$. And $C_\Sigma Y$ and $F_\Sigma Y$ are
  its subalgebras on all trees with finitely many leaves labeled in
  $Y$, or all finite trees, respectively.
 
  Since $TY$ is a corecursive algebra, there exists a unique
  homomorphism of $H$-algebras
  \[
    m_Y :CY \to TY
  \]
  with $m_Y \cdot \eta^C_Y = \eta_Y$. The corresponding
  $H_\Sigma$-algebra morphism is denoted by
  \[
    m^\Sigma_Y :C_\Sigma Y \to T_\Sigma Y.
  \]
\end{notation} 
\begin{rem}\label{R:AM}
  In ~\cite{am06} we described $FY$ and $TY$ as the
  following quotient of the $\Sigma$-algebras $F_\Sigma Y$ and
  $T_\Sigma Y$, respectively. Recall from Lemma~\ref{L:pres} that
  every $H$-algebra $a: HA \to A$ may be regarded as the
  $H_\Sigma$-algebra with structure
  $a \cdot \eps_A: H_\Sigma A \to A$.
  \begin{enumerate}[(1)]
  \item $FY = F_\Sigma Y / \mathord{\sim_Y}$, where $\sim_Y$ is the
    congruence of finite application of $\eps$-equations. That is, the
    smallest congruence with $\sigma(x_1, \ldots, x_n) \sim_Y \tau(z_1,
    \ldots, z_m)$ for every $\eps$-equation 
    \[
      \sigma(x_1, \ldots, x_n) = \tau(z_1,\ldots, z_m)
    \]
    over $Y$. The universal map $\eta^F_Y: Y
    \to FY$ is the composition of the one of $F_\Sigma Y$ with the
    canonical quotient map $F_\Sigma Y \epito F_\Sigma Y/\mathord{\sim_Y}$. 
  \item $TY = T_\Sigma Y/\mathord{\sim^*_Y}$, where $\sim^*_Y$ is the
    congruence of (possibly infinitely many) applications of
    $\eps$-equations. The universal map is $\wh\eta_Y
    = \wh\eps_Y \cdot \eta^\Sigma_Y$, where $\eta^\Sigma_Y: Y \to
    T_\Sigma Y$ is the universal map of the free cia for $H_\Sigma$ on
    $Y$ and $\wh\eps_Y: T_\Sigma Y \epito T_\Sigma Y/\mathord{\sim^*_Y}$
    is the canonical quotient map. 
  \end{enumerate}
\end{rem}
The definition of a possibly \emph{infinite application of
  $\eps$-equations} is based on the concept of \emph{cutting} a
$\Sigma$-tree at level $k$: the resulting finite $\Sigma$-tree
$\partial_k t$ is obtained from $t$ by deleting all nodes of depth
larger than $k$ and relabeling all nodes at level $k$ by a symbol
$\bot \not\in Y$. Then we define, for $\Sigma$-trees $t$ and $s$ in
$T_\Sigma Y$,
\[
  t \sim^*_Y s
  \qquad
  \text{iff}
  \qquad
  \partial_k t \sim_{Y \cup \{\bot\}} \partial_k s
  \quad
  \text{for every $k < \omega$}.
\] 

Not surprisingly, $CY$ can be described analogously:
\begin{proposition}\label{P:C}
The free corecursive $H$-algebra $CY$ is the quotient of the $\Sigma$-algebra $C_\Sigma Y$ modulo the application of $\eps$-equations:
$CY=C_\Sigma Y/\mathord{\sim^*_Y}$.
\end{proposition}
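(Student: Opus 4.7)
Set $A := C_\Sigma Y/\mathord{\sim^*_Y}$ (with $\sim^*_Y$ restricted to $C_\Sigma Y$) and let $q_Y: C_\Sigma Y \to A$ denote the quotient. The plan is to exhibit $A$ as the free corecursive $H$-algebra on $Y$, with universal map $\eta^C_Y := q_Y \cdot \eta^{C_\Sigma}_Y$, proceeding in three steps that mimic the structure of Remark~\ref{R:AM}.

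\emph{First, that $A$ is an $H$-algebra.} Since $C_\Sigma Y$ is an $H_\Sigma$-subalgebra of $T_\Sigma Y$ and $\sim^*_Y$ a congruence on the latter, its restriction is a congruence on the former; as $\sim^*_Y$ identifies every $\eps$-equation (even iterated through the tree), the quotient $A$ satisfies all $\eps$-equations and hence inherits an $H$-algebra structure $a$ via Lemma~\ref{L:pres}. Moreover, the inclusion $C_\Sigma Y \hookrightarrow T_\Sigma Y$ descends to an $H$-algebra embedding $\bar m_Y: A \hookrightarrow TY$ using Remark~\ref{R:AM}(2).

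\emph{Second, that $A$ is corecursive.} For $e: X \to HX$, pick any lift $e^\Sigma: X \to H_\Sigma X$ with $\eps_X \cdot e^\Sigma = e$ (possible by surjectivity of $\eps_X$), let $s^\Sigma: X \to C_\Sigma Y$ be the unique solution provided by corecursiveness of $C_\Sigma Y$, and set $s := q_Y \cdot s^\Sigma$. A short diagram chase using $\eps$-naturality and the relation $a \cdot \eps_A = \phi^A$ (where $\phi^A$ denotes the induced $H_\Sigma$-structure on $A$) shows $s$ solves $e$ in $A$. Both independence of the chosen lift and uniqueness of $s$ reduce, via $\bar m_Y$, to the unique solvability of $e$ in the cia $TY$: any two candidate solutions in $A$ compose with $\bar m_Y$ to solutions of $e$ in $TY$, which must coincide.

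\emph{Third, that $A$ is free among corecursive $H$-algebras,} which is where the main obstacle lies. Given $(B, b)$ corecursive with $f: Y \to B$, observe that $(B, b \cdot \eps_B)$ is corecursive for $H_\Sigma$ (every $H_\Sigma$-coalgebra $e^\Sigma: X \to H_\Sigma X$ descends via $\eps_X$ to an $H$-coalgebra, and the unique solution in $B$ serves both), so freeness of $C_\Sigma Y$ yields a unique $H_\Sigma$-morphism $\hat f^\Sigma: C_\Sigma Y \to B$ extending $f$. The crux is showing that $\hat f^\Sigma$ descends through $q_Y$ to the desired $\hat f: A \to B$, i.e., respects $\sim^*_Y$. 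Since $B$ satisfies $\eps$-equations, $\hat f^\Sigma$ automatically respects the finitary congruence $\sim_Y$; the hard part is extending this to the infinitary $\sim^*_Y$, which requires exploiting corecursiveness of $B$. The plan is to decompose each $t \in C_\Sigma Y$ into its finite $Y$-skeleton (a finite $\Sigma$-tree with $Y$-leaves and holes) together with the infinitary $T_\Sigma\emptyset$-subtrees grafted at the holes, and to compute $\hat f^\Sigma(t)$ by evaluating the skeleton in $B$ using $f$ on the $Y$-leaves and, on each grafted $T_\Sigma\emptyset$-subtree, the canonical $H$-morphism $\nu H \to B$ arising from corecursiveness of $B$ applied to the terminal coalgebra $\ter: \nu H \to H\nu H$. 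Since the skeleton depends on $t$ only modulo $\sim_Y$ and each grafted subtree only modulo $\sim^*_\emptyset$, and both congruences are controlled by $\sim^*_Y$, the value depends only on $[t]_A$. Uniqueness of $\hat f$ is then immediate since $q_Y$ is epi.
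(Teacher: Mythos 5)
Your steps 1 and 2 are fine, and your overall strategy (a direct verification of the universal property of the free corecursive algebra, using that $C_\Sigma Y$ is free corecursive for $H_\Sigma$ and that $(B,b\cdot\eps_B)$ is corecursive for $H_\Sigma$ whenever $(B,b)$ is corecursive for $H$) is genuinely different from the paper's, which instead invokes the coproduct descriptions $CY=\nu H\oplus FY$ in $\Alg H$ and $C_\Sigma Y=\nu H_\Sigma\boxplus F_\Sigma Y$ in $\Alg H_\Sigma$ from the earlier work and reduces everything to the general fact that coproducts in the subvariety are quotients of $\Sigma$-algebra coproducts, combined with $\nu H=\nu H_\Sigma/\mathord{\sim^*_\emptyset}$ and $FY=F_\Sigma Y/\mathord{\sim_Y}$.

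However, your third step has a genuine gap exactly where you locate ``the crux.'' What you actually establish via the skeleton-plus-grafts decomposition is that $\hat f^\Sigma$ is invariant under the congruence \emph{generated by} (a) single applications of $\eps$-equations (since $B$ satisfies them) and (b) replacement of a grafted subtree in $T_\Sigma\emptyset$ by a $\sim^*_\emptyset$-equivalent one (since $\hat f^\Sigma$ restricted to $\nu H_\Sigma$ factors through $\hat\eps_\emptyset\colon\nu H_\Sigma\to\nu H$ by corecursiveness of $B$). The sentence ``both congruences are controlled by $\sim^*_Y$, [so] the value depends only on $[t]_A$'' then silently asserts that $\sim^*_Y$ restricted to $C_\Sigma Y$ is \emph{contained in} the congruence generated by (a) and (b) --- but that containment is precisely the nontrivial content of the proposition. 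Recall that $t\sim^*_Y t'$ only says $\partial_k t\sim_{Y\cup\{\bot\}}\partial_k t'$ for every $k$, via chains of $\eps$-applications whose length and location may grow with $k$ and which may cross back and forth between the skeleton region and the grafted subtrees; nothing you have written rules out that two trees are identified by this limit process without being connected by finitely many skeleton-level rewrites plus graft-wise $\sim^*_\emptyset$-replacements. So the argument as it stands reduces the proposition to an unproved combinatorial claim that is essentially equivalent to it. To repair it you would either have to prove that congruence identification directly (this is the kind of analysis carried out for $TY=T_\Sigma Y/\mathord{\sim^*_Y}$ in the cited work), or follow the paper and import the coproduct description of $CY$, for which the identification has already been done.
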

\begin{proof}
  This is based on the following description of $CY$ presented
  in~\cite{ahm14}: denote by $\oplus$ the binary coproduct of
  $H$-algebras in $\Alg H$. By Lemma~\ref{L:pres}, this is,
  equivalently, the coproduct in the variety of all $\Sigma$-algebras
  satisfying all $\eps$-equations. Then we have
  \[
    CY=\nu H \oplus FY.
  \]
  Analogously, if $\boxplus$ denotes the binary coproduct of
  $\Sigma$-algebras, we of course have
  \[
    C_\Sigma Y=\nu H_\Sigma \boxplus F_{\Sigma} Y.
  \]
  For arbitrary $H$-algebras $A$ and $B$ we know that $A \oplus B$ is
  the quotient of $A \boxplus B$ modulo the application of $\eps$-equations.
  Moreover, we have $T=T_\Sigma /\mathord{\sim^*}$ and
  $FY=F_\Sigma Y/\mathord{\sim}$. It follows immediately that
  $T \oplus FY=(T_\Sigma \boxplus F_\Sigma Y)/\mathord{\sim^*}$, as claimed.
\end{proof}
\begin{lemma}\label{L:square}
  Suppose that $CY$ is a cia for $H$. For every equation morphism $e:
  X \to H_\Sigma X + Y$ with the unique solution $\ssol e: X
  \to T_\Sigma Y$ we can form an equation morphism
  \[
    \ol e = (X \xrightarrow{e} H_\Sigma X + Y
    \xrightarrow{\eps_X+\eta^C_Y} HX + CY).
  \]
  \iffull
  Then the square below commutes:
  \begin{equation}
    \label{eq:sq}
    \xymatrix{
      X \ar[r]^-{\sol{\ol e}} \ar[d]_{\ssol e} & CY \ar[d]^{m_Y}\\
      T_\Sigma Y \ar[r]_-{\hat\eps_Y} & TY
    }
  \end{equation}
  \else
  Then we have 
  $
  (X \xrightarrow{\sol{\ol e}} CY \xrightarrow{m_Y} TY) 
  =
  (X \xrightarrow{\ssol e} T_\Sigma Y \xrightarrow{\hat\eps_Y} TY)
  $.
  \fi
\end{lemma}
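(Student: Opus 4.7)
The plan is to show that both $m_Y\cdot\sol{\ol e}$ and $\hat\eps_Y\cdot\ssol e$ are solutions, in the cia $(TY,\tau_Y)$ for $H$, of the \emph{same} flat equation morphism
\[
\ol e_T \;:=\; (HX + m_Y)\cdot\ol e \;:\; X \to HX + TY,
\]
and then to invoke uniqueness of cia-solutions to conclude. Using $m_Y\cdot\eta^C_Y = \eta_Y$ (Notation~\ref{N:not}) one can alternatively write $\ol e_T = (\eps_X + \eta_Y)\cdot e$. That $TY$ is indeed a cia for $H$ is Example~\ref{ex:cia}, since $TY$ is the terminal coalgebra for $H(-)+Y$.

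For the left-hand side I would apply the standard observation that postcomposing a cia-solution with an $H$-algebra morphism produces a cia-solution of the transported equation. The map $m_Y:(CY,\psi_Y)\to(TY,\tau_Y)$ is an $H$-algebra morphism by construction (Notation~\ref{N:not}), and an easy diagram chase using $m_Y\cdot\psi_Y = \tau_Y\cdot Hm_Y$ then shows that $m_Y\cdot\sol{\ol e}$ solves $\ol e_T$ in $TY$.

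For the right-hand side, the plan is to expand $\ssol e = [\tau^\Sigma_Y\cdot H_\Sigma\ssol e,\,\eta^\Sigma_Y]\cdot e$, postcompose with $\hat\eps_Y$, and rewrite using three ingredients: $\hat\eps_Y$ is an $H_\Sigma$-algebra morphism into $TY$ equipped with the induced $H_\Sigma$-structure $\tau_Y\cdot\eps_{TY}$ (Remark~\ref{R:AM}(2) and Lemma~\ref{L:pres}); $\hat\eps_Y\cdot\eta^\Sigma_Y = \eta_Y$ (also from Remark~\ref{R:AM}(2)); and naturality of $\eps$, which in particular gives $\eps_{TY}\cdot H_\Sigma\hat\eps_Y\cdot H_\Sigma\ssol e = H(\hat\eps_Y\cdot\ssol e)\cdot\eps_X$. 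Assembling these identities yields
\[
\hat\eps_Y\cdot\ssol e \;=\; [\tau_Y\cdot H(\hat\eps_Y\cdot\ssol e),\,TY]\cdot(\eps_X + \eta_Y)\cdot e,
\]
which is exactly the cia equation for $\hat\eps_Y\cdot\ssol e$ relative to $\ol e_T$. Uniqueness of cia-solutions in $TY$ then forces $m_Y\cdot\sol{\ol e} = \hat\eps_Y\cdot\ssol e$. The only mildly delicate point is the bookkeeping around the two algebra structures on $TY$ (as $H$-algebra and as $H_\Sigma$-algebra) and the naturality squares for $\eps$ that translate between them; conceptually the argument is just two applications of the principle that algebra morphisms preserve solutions, aligned through the epitransformation $\eps$.
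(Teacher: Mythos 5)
Your proposal is correct and follows essentially the same route as the paper: the paper's proof also introduces the auxiliary equation morphism $\wt e = (\eps_X + \eta_Y)\cdot e : X \to HX + TY$ (your $\ol e_T$), shows that both $m_Y\cdot\sol{\ol e}$ and $\hat\eps_Y\cdot\ssol{e}$ solve it in the cia $TY$ using exactly the ingredients you list ($m_Y$ being an $H$-algebra morphism with $m_Y\cdot\eta^C_Y=\eta_Y$, $\hat\eps_Y$ being an $H_\Sigma$-algebra quotient with $\hat\eps_Y\cdot\eta^\Sigma_Y=\eta_Y$, and naturality of $\eps$), and concludes by uniqueness of solutions.
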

\iffull% proof only in full version
\begin{proof}
  Put 
  \[
    \wt e = (X \xrightarrow{e} H_\sigma X + Y \xrightarrow{\eps_X +
      \eta_Y} HX + TY).
  \]
  We prove that both sides of the square~\eqref{eq:sq} are solutions of $\wt e$ in
  the cia $TY$ for $H$. 

  (1)~That $\wh\eps_Y \cdot \ssol{\ol e}$ solves $\wt e$ is due to the
  following diagram:
  \[
    \xymatrix@C+2pc{
      X
      \ar[d]_e 
      \ar[r]^-{\ssol e}
      &
      T_\Sigma Y 
      \ar[r]^-{\hat\eps_Y}
      & 
      TY
      \\
      H_\Sigma X + Y
      \ar[r]_-{H_\Sigma \ssol e + Y}
      \ar[d]_{\eps_X + \eta_Y}
      &
      H_\Sigma T_\Sigma Y + Y 
      \ar[u]_{[\tau^\Sigma_Y, \eta^\Sigma_Y]}
      \ar[d]^{\eps_{TY} + \eta_Y}
      \ar[r]_-{H_\Sigma \hat\eps_Y + Y}
      &
      H_\Sigma TY + Y
      \ar[u]^{[\tau_Y \cdot \eps_{TY}, \eta_Y]}
      \ar[d]_{\eps_{TY} + \eta_Y}
      \\
      HX + TY 
      \ar[r]_-{H\ssol e + Y}
      \ar@{<-} `l[u] `[uu]^{\tilde e} [uu]
      &
      HT_\Sigma Y + Y
      \ar[r]_-{H\hat\eps_Y + TY}
      &
      HTY + TY
      \ar `r[u] `[uu]_{[\tau_Y, TY]} [uu]
    }
  \]
  The left-hand part commutes by the definition of $\wt e$, and the
  right-hand part does trivially. The upper left-hand square commutes
  by the definition of $\ssol e$. For the lower two ones consider the
  coproduct components separately: the left-hand one commutes since
  $\eps$ is natural, and the right-hand one trivially does. And for
  the remaining upper right-hand part one considers the coproduct
  components separately once more: the right-hand one states that
  $\wh\eps_Y \cdot \eta^\Sigma_X = \eta_Y$, and for the left-hand one
  we use that $TY$ considered as an $H_\Sigma$-algebra (with the
  structure $\tau_Y \cdot \eps_{TY}$) is a quotient of the free
  $H_\Sigma$-algebra $(T_\Sigma Y, \tau^\Sigma_Y)$ via the quotient
  algebra morphism $\wh\eps_Y$ as explained in Remark~\ref{R:AM}(2).

  (2)~That $m_Y \cdot \sol{\ol e}$ solves $\wt e$ is due to the
  following diagram:
  \[
    \xymatrix@C+2pc{
      X
      \ar[rr]^{\sol{\ol e}}
      \ar[d]_e
      && 
      CY
      \ar[r]^-{m_Y}
      & 
      TY 
      \\
      H_\Sigma X + Y
      \ar[rd]^-{\eps_X + \eta^C_Y}
      \ar[dd]_{\eps_X + \eta_Y}
      \\
      & 
      HX + CY
      \ar[r]_-{H\sol{\ol e} + CY}
      \ar[ld]_-*+{\labelstyle HX + m_Y}
      &
      HCY + CY
      \ar[d]^{HCY + m_Y}
      \ar[uu]_{[\psi_Y, CY]}
      \\
      HX + TY
      \ar@{<-} `l[u] `[uuu]^{\tilde e} [uuu]
      \ar[rr]_-{H\sol{\ol e} + TY}
      &&
      HCY + TY
      \ar[r]_-{Hm_Y + TY}
      &
      HTY + TY \ar[uuu]_{[\tau_Y, TY]}
    }
  \]
  The left-hand part commutes by the definition of $\wt e$, and the
  upper left inner part commutes since $\sol{\ol e}$ is a solution of
  $\ol e$. For the triangle on the left consider the coproduct
  components separately: the right-hand one commutes since
  $m_Y \cdot \eta^C_Y = \eta_Y$ (see Notation~\ref{N:not}), and the
  left-hand component trivially commutes; the middle lower part
  obviously commutes.  Finally, for the right-hand part consider the
  coproduct components separately ones more: the left-hand component
  commutes since $m_Y$ is an $H$-algebra morphism from $(CY, \psi_Y)$
  to $(TY, \tau_Y)$, and the right-hand component trivially commutes.
\end{proof}
\fi% end iffull
\begin{theorem}\label{T:finitary}
  For a finitary set functor $H$ the following conditions are
  equivalent:
  \begin{enumerate}
  \item $H$ is a cia functor,
  \item $H = H_0(-) + Y$ where $H_0$ preserves countable coproducts
    and $Y$ is a set, and
  \item $H = W \times (-) + Y$ for some sets $W$ and $Y$. 
  \end{enumerate}
\end{theorem}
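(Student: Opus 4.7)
The plan is to establish the cycle $(3) \Rightarrow (2) \Rightarrow (1) \Rightarrow (3)$. For $(3) \Rightarrow (2)$, simply take $H_0 = W \times (-)$, which preserves all coproducts. For $(2) \Rightarrow (3)$, since $H$ is finitary so is $H_0$, and a finitary set functor preserving countable coproducts must be of the form $W \times (-)$: preservation of finite coproducts forces $H_0 \emptyset = \emptyset$ and $H_0 n \cong n \times H_0 1$, and finitariness extends this to all sets (with $W = H_0 1$). The implication $(2) \Rightarrow (1)$ is then Corollary~\ref{C:strong} applied to $H_0$, once we note that any finitary set functor has a terminal coalgebra.

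The substantive content lies in $(1) \Rightarrow (3)$. I would fix a reduced presentation $\eps: H_\Sigma \to H$ (Proposition~\ref{P:red}) and show that $\Sigma$ contains no operation of arity $\geq 2$. Granting this, conditions (1) and (2) of Definition~\ref{D:red} applied to unary operations immediately force every $\eps$-equation involving a non-constant symbol to be trivial (any such equation $\sigma(x_1) = \tau(z_1,\ldots,z_m)$ either contradicts (1) when $\tau$ is a constant, or reduces to a trivial permutation via (2) when $\tau$ is unary). Setting $W := \Sigma_1$ and $Y := \Sigma_0/\mathord{\sim_\emptyset}$ (the quotient by $\eps$-equations among constants) then yields $H X = W \times X + Y$.

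To rule out arity $\geq 2$, suppose $\sigma \in \Sigma_n$ with $n \geq 2$ and fix $Y = \{y\}$. Consider the equation morphism $e: \{x\} \to H_\Sigma\{x\} + \{y\}$ given by $x \mapsto \sigma(x, y, \ldots, y)$, whose unique solution in $T_\Sigma\{y\}$ is the infinite tree $t^* = \sigma(t^*, y, \ldots, y)$. Because $H$ is a cia functor, the free corecursive algebra $C\{y\}$ is itself a cia, and by universality of the free cia $T\{y\}$ the canonical morphism $m_{\{y\}}: C\{y\} \to T\{y\}$ is an isomorphism. Combining Lemma~\ref{L:square} with the descriptions $CY = C_\Sigma Y/\mathord{\sim^*_Y}$ and $TY = T_\Sigma Y/\mathord{\sim^*_Y}$ (Proposition~\ref{P:C} and Remark~\ref{R:AM}), one extracts a finite-leaf tree $t \in C_\Sigma\{y\}$ with $t \sim^*_{\{y\}} t^*$; equivalently, $\partial_k t \sim_{\{y,\bot\}} \partial_k t^*$ for every $k$. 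The main obstacle is then to derive a contradiction from this: $\partial_k t^*$ has $(n-1)(k-1)$ leaves labeled $y$, growing without bound in $k$, whereas $\partial_k t$ has at most $|t|_y$ such leaves. I expect the contradiction to follow from a combinatorial invariant afforded by the reduced presentation---namely, the restriction that the right-hand side of any $\eps$-equation with pairwise distinct left-hand variables must contain all those variables, and the pairwise-distinct-on-both-sides case forces a mere permutation---which should prevent a bounded-leaf tree from being $\sim_{\{y,\bot\}}$-equivalent to a tree with unboundedly many $y$-leaves.
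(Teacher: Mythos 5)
Your outer architecture is sound and mostly matches the paper: $(2)\Rightarrow(3)$ via the observation that a finitary countable-coproduct-preserving set functor is $W\times(-)$ (the paper cites Trnkov\'a's theorem plus adjoint representability, your direct computation $H_0 n\cong n\times H_0 1$ is a legitimate elementary substitute), $(3)/(2)\Rightarrow(1)$ via Corollary~\ref{C:strong}, and the reduction of $(1)$ to showing that a reduced presentation has no symbol of arity $\geq 2$. But the decisive step is exactly the one you leave open, and the specific counterexample you set up would not support it.

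First, the gap itself: you write that you ``expect the contradiction to follow from a combinatorial invariant afforded by the reduced presentation.'' That invariant, and the proof that it is preserved under arbitrary (finite and then infinite) applications of $\eps$-equations, is the entire content of the implication $(1)\Rightarrow(2)/(3)$; without it nothing is proved. Second, and more seriously, your choice of equation morphism $e(x)=\sigma(x,y,\ldots,y)$ over the single-element parameter set $Y=\{y\}$ undermines the only tool available. The reducedness conditions of Definition~\ref{D:red} constrain \emph{only} those $\eps$-equations whose left-hand side has pairwise distinct variables. For $n\geq 3$ the nodes of your solution tree $t^*=\sigma(t^*,y,\ldots,y)$ have repeated children, so $\eps$-equations of the shape $\sigma(x_1,x_2,\ldots,x_2)=\rho(x_1)$ (say, with $\rho$ unary) may apply to them, and reducedness says nothing about such equations; applying one repeatedly yields $t^*\sim^*_Y\rho^\omega$, a tree with \emph{no} $y$-leaves at all, so no contradiction with $s\in C_\Sigma Y$ arises. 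Also, your intended invariant --- the count of $y$-leaves at level $k$ --- is not preserved by $\sim_Y$ in general (think of $\powf$, where $\{x,x\}=\{x\}$ collapses leaves). The paper avoids both problems by taking $n-1$ \emph{pairwise distinct} parameters $y_2,\ldots,y_n$ and the tree $t$ with $t=\alpha(t,y_2,\ldots,y_n)$: every $\alpha$-node of $t$ is then \emph{pure} (its child subtrees are pairwise distinct), so any applicable $\eps$-equation $\alpha(u_1,\ldots,u_n)=\tau(z_1,\ldots,z_m)$ has pairwise distinct $u_i$, and condition~(1) of reducedness forces both $u_1$ (the copy of $t$) and $u_2$ (the leaf $y_2$) to reappear among the $z_j$. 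This yields the robust invariant ``some leaf labeled $y_2$ occurs at every positive level,'' which survives all applications of $\eps$-equations and contradicts $t\sim^*_Y s$ for $s$ with only finitely many leaves in $Y$. You would need to redo part (b) of your argument along these lines.
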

\begin{proof}
  (2) $\Rightarrow$ (3). Since $H$ is finitary, so is $H_0$, by
  the description of finitarity following Assumptions~\ref{ass}. Therefore, $H_0$ preserves all
  coproducts. Trnkov\'a proved~\cite[Theorem~IX.8]{t71}, that every
  coproduct-preserving set functor preserves colimits, thus it is a
  left adjoint.  It is well known that the only right adjoint set
  functors $R$ are the representable ones: for given $L \dashv R$, put
  $W=L1$, then the elements $1 \to RY$ bijectively correspond to the
  maps $W \to Y$, thus, $R$ is naturally isomorphic to
  $\Set(W,-)$. Consequently, $H_0$ is left adjoint to $\Set(W,-)$, hence
  it is naturaly isomorphic to $W \times (-)$.

  (3) $\Rightarrow$ (1). This follows from Corollary~\ref{C:strong}.

  (1) $\Rightarrow$ (2). Let $\eps: H_\Sigma \to H$ be a reduced
  presentation. 
  
  (a)~We prove below that all arities in $\Sigma$ are 1 or 0.  Let $W$
  be the set of all unary symbols and $Y$ that of all constants.  Then
  $H_\Sigma X=W \times X + Y$. Furthermore, we show that $\eps$ is a
  natural isomorphism. Indeed, each $\eps_X$ is, besides being
  surjective, also injective: it cannot merge distinct elements
  $(w,x)$ and $(w',x')$ of $W \times X$ because this would yield an
  $\eps$-equation $w(x)=w'(x')$. Since the presentation is reduced,
  this implies $w = w'$ and $x=x'$. Analogously for all other pairs of
  elements of $H_\Sigma X$.

  \medskip 
  (b)~Assume that some symbol $\alpha$ of $\Sigma$ has arity
  at least $2$. Then we derive a contradiction to $H$ being a cia
  functor. Given a $\Sigma$-tree $t$ we call a node $r$ \emph{pure}
  if the trees $t_1, \ldots, t_n$ rooted at the children of $r$ are
  pairwise distinct\iffull\/:
  \[
    \vcenter{
      \xy
      \POS (000,000) *+[o]=<8pt>[F-]{ } = "o",
           (-15,-30) = "l",
           (015,-30) = "r",
           (000,-15) *+[o]=<10pt>[F-]{\sigma} = "s",
           (004,-15) *{r},
           (-05,-20) = "lt",
           (005,-20) = "rt",
           (-08,-30) = "llt",
           (-02,-30) = "rlt",
           (-05,-28) *{t_1},
           (002,-30) = "lrt",
           (008,-30) = "rrt",
           (005,-28) *{t_n},
           (000,-22) *{\cdots}
      %%% 
      \ar@{-} "o";"l"
      \ar@{-} "o";"r"
      \ar@{-} "s";"lt"
      \ar@{-} "s";"rt"
      \ar@{-} "lt";"llt"
      \ar@{-} "lt";"rlt"
      \ar@{-} "rt";"lrt"
      \ar@{-} "rt";"rrt"
      \endxy
    }
    \qquad
    \text{($\sigma$ an $n$-ary operation symbol).} 
  \]\else\/. \fi
  Observe that an $\eps$-equation applicable to a pure node $r$ must
  have the form
  \iffull\[\else\/$\fi \sigma(x_1, \ldots, x_n) = \tau(y_1, \ldots,
    y_m) \iffull\]\else\/$ \fi for some $\tau \in \Sigma_m$, where
  $x_1, \ldots, x_n$ are pairwise distinct.

  Consider the following equation morphism $e: X \to H_\Sigma X + Y$ with
  $X = \{x_1, \ldots, x_n\}$ and $Y = \{y_2, \ldots, y_n\}$:
  \iffull
  \[
    e(x_1) = \alpha(x_1, y_2, y_n)
    \qquad\text{and}\qquad
    e(x_i) = y_i\quad
    \text{for $i = 2, \ldots, n$.}
  \]
  \else
  $e(x_1) = \alpha(x_1, y_2, y_n)$ and $e(x_i) = y_i$, for $i = 2, \ldots, n$. \fi
  Then the unique solution  $\ssol e:X \to T_\Sigma Y$ assigns to $x_1$
  the $\Sigma$-tree below:
  \[
    \ssol e (x_1) = 
    \vcenter{
    \xy
      \POS (000,000) *+{\alpha} = "a1",
           (-15,-10) *+{\alpha} = "a2",
           (-05,-10) *+{y_2} = "12",
           (005,-10) *+{y_n} = "1n",
           (000,-10) *{\cdots},
           (-20,-20) *+{y_2} = "22",
           (-10,-20) *+{y_n} = "2n",
           (-15,-20) *{\cdots},
           (-30,-20) *+{\vdots} = "a3"
       %%%
       \ar@{-} "a1";"a2"
       \ar@{-} "a1";"12"
       \ar@{-} "a1";"1n"
       \ar@{-} "a2";"a3"
       \ar@{-} "a2";"22"
       \ar@{-} "a2";"2n"
    \endxy
    }
  \]
  Next consider the equation morphism
  \iffull\[\else\/$\fi
    \ol e = (X \xrightarrow{e} H_\Sigma X + Y \xrightarrow{\eps_X +
      \eta^C_Y} HX + CY).
  \iffull\]\else\/$ \fi
  Since $CY$ is a cia, this has a unique solution $\sol e: X \to CY$. It
  assigns to $x_1$ an element of $CY$ which by Proposition~\ref{P:C}
  has the form
  \iffull
  \[
    \sol{\ol e}(x_1) = \ol\eps_Y(s)
    \qquad
    \text{for some $s \in C_\Sigma Y$},
  \]
  \else
  $\sol{\ol e}(x_1) = \ol\eps_Y(s)$ for some $s \in C_\Sigma Y$, 
  \fi
  where $\ol\eps_Y: C_\Sigma Y \epito C_\Sigma Y/\mathord{\sim^*} \cong
  CY$ denotes the canonical quotient map. From Lemma~\ref{L:square} we
  know that
  \[
    \wh\eps_Y(t) 
    =
    \wh\eps_Y \cdot \ssol e(x_1) 
    =
    m_Y \cdot \sol{\ol e}(x_1) 
    = 
    m_Y \cdot \ol\eps_Y(s).
  \]
  Therefore, we obtain $t \sim^*_Y s$.

  We derive the desired contradiction by proving that every tree
  obtained from $t$ by a finite application of $\eps$-equations has a
  leaf labeled by $y_2$ at every positive level. From this we conclude 
  immediately that the same holds for all trees obtained by an
  infinite application of $\eps$-equations from $t$. However, $t
  \sim^*_Y s$ where $s$ has only finitely many leaves labeled by
  $y_2$.

  \medskip (b1)~Assume that a single $\eps$-equation is applied to $t$ and let $t'$ be
  the resulting tree. Let $r$ be the node of $t$ at which the
  application takes place. Then $r$ is not a leaf labeled in $Y$; for
  recall that all $\eps$-equations  have operation
  symbols on both sides, thus, they are not applicable to leaves
  labeled in $Y$. Therefore, $r$ is a pure node labeled by $\alpha$. The
  $\eps$-equation in question thus has the form
  \iffull\[\else\/$\fi
    \alpha(u_1,\ldots, u_n) = \tau(z_1, \ldots, z_m)
  \iffull\]\else\/$ \fi
  for some $\tau \in \Sigma_m$ and with the $u_i$ pairwise distinct.

  If $r$ has depth $k$, then the tree $t'$ has label $y_2$ at all
  levels $1, \ldots, k$, since those leaves of $t$ are
  unchanged. Furthermore, we have $u_2 = z_p$ for some
  $p= 1, \ldots, m$ since $\eps$ is a reduced presentation. Therefore,
  $y_2$ occurs at level $k+1$ since the $p$-th child of $r$ in $t'$ is
  a leaf labeled by $y_2$. For the levels greater than $k+1$ we use
  that $u_1 = z_q$ holds for some $q= 1, \ldots, m$, again because
  $\eps$ is a reduced presentation. Since the first subtree of $r$ in
  $t$ is $t$ itself, it follows that the $q$-th child of $r$ in $t'$
  is $t$ itself. Thus, a label $y_2$ of depth $n$ in $t$ yields a
  label $y_2$ of depth $k+1+n$ of $t'$.

  \medskip (b2)~Assume that two $\eps$-equations are applied to $t$. The resulting
  tree $t''$ can be obtained from $t'$ in~(b1) by a single application
  of an $\eps$-equation. Let $r'$ be the node of $t'$ at which the
  application takes place. We can assume $r \neq r'$ (for if $r = r'$
  we can obtain $t''$ from $t$ by a single application on an
  $\eps$-equation; this follows from Remark~\ref{rem:eps}). If
  $r'$ does not lie in the subtree of $t'$ with root $r$, then
  $r'$ is a pure node labeled by $\alpha$ and we argue as in~(b1).

  Suppose therefore that $r'$ lies in the subtree rooted at $r$. If
  this is the $q$-th subtree from~(b1) above (the one with $u_1 =
  z_q$), then we also argue as in~(b1) using that the $q$-th subtree
  is $t$ itself. Otherwise, if $r'$ lies in any other subtree of $r$,
  then the labels $y_2$ of the $q$-th subtree are unchanged. 

  The remaining cases of three and more applications of
  $\eps$-equations are completely analogous. This yields the desired
  contradiction: if $t \sim^* \ol t$, then $\ol t$ has label $y_2$ at
  every level $1, 2, 3, \ldots$, thus $t\sim^*_Y s$ cannot be true. 
\end{proof}

%%% Local Variables:
%%% mode: latex
%%% TeX-master: "coproduct"
%%% End:

\section{Conclusions and Open Problems}

For endofunctors $H$ preserving countable coproducts and having a
terminal coalgebra we have described the free corecursive algebra on
an object $Y$ as $\nu H +\coprod_{n < \omega} H^n Y$. In addition, we
have shown that $H$ is a cia functor, i.e., every corecursive algebra
for $H$ is a cia. For this we assumed that the base category
has well-behaved countable coproducts, i.e., the category is
hyper-extensive. It is an open problem whether our results hold in
more general categories, e.g., in all extensive locally presentable
ones.

For accessible functors $H$ on locally presentable categories, the
free corecursive algebra on $Y$ was described in previous
work~\cite{ahm14} as the coproduct of $FY$ (the free algebra on $Y$)
and $\nu H$ (considered as an algebra) in the category $\Alg H$. If
$H$ preserves countable coproducts, this is quite similar to the above
desciption of the free cia, since coproducts of algebras are then
formed on the level of the underlying category and therefore
$FY= \coprod_{n < \omega} H^n Y$. But the proof techniques are
completely different, and a common generalization of the two results
is open.

We have also characterized all cia functors among finitary set
functors: they are precisely the functors $X \mapsto W \times X + Y$
for some sets $W$ and $Y$. In Example~\ref{E:U} we have seen that
the same result does not hold for all, not necessarily finitary, set
functors. But that example required an assumption about set theory. It
is an open problem whether that assumption was really necessary.
 
Our results can be stated in terms of corecursive monads~\cite{ahm14}
and completely iterative ones~\cite{aamv} as follows: a functor $H$
having a terminal coalgebra $\nu H$ and preserving countable
coproducts has a free corecursive monad of the form
$\coprod_{n < \omega} H^n(-) + \nu H$, and this is also the free
completely iterative monad on $H$.

% 
% Bibliography
%
\bibliographystyle{plainurl}% the recommended bibstyle
\bibliography{refs}

\end{document}